\title{Finding the Mode of a Kernel Density Estimate} 
\author{Jasper C.H.~Lee}{Brown University}{jasperchlee@brown.edu}{}{}
\author{Jerry Li}{Microsoft Research}{jerrl@microsoft.com}{}{}
\author{Christopher Musco}{New York University}{cmusco@nyu.edu}{}{}
\author{Jeff M. Phillips}{University of Utah}{jeffp@cs.utah.edu}{}{}
\author{Wai Ming Tai}{University of Utah}{wmtai@cs.utah.edu}{}{}
\authorrunning{J.C.H.~Lee \etal}
\keywords{kernel density estimation, dimensionality reduction, coresets, mean-shift}
\newcommand{\eps}{\varepsilon}
\renewcommand{\R}{\ensuremath{\mathbb{R}}}
\newcommand{\etal}{\emph{et al.}\xspace}
\newcommand{\abs}[1]{\left| #1 \right|}
\newcommand{\setdef}[2]{\left\{ #1 \mid #2 \right\}}
\newcommand{\jasper}[1]{{\color{magenta}$\langle$#1$\rangle$}}
\DeclareMathOperator{\argmax}{argmax}
\newcommand{\norm}[1]{\|#1\|}
\renewcommand{\G}{\mathcal{G}}
\newcommand{\Gr}{\textsf{Grid}}
\newcommand{\Gb}{\overline{\mathcal{G}}}
\newtheorem{observation}[theorem]{Observation}
\begin{document}
	
\maketitle
	
\begin{abstract}
Given points $p_1, …, p_n$ in $\R^d$, how do we find a point $x$ which maximizes $\frac{1}{n} \sum_{i=1}^n e^{-\norm{p_i - x}^2}$? In other words, how do we find the maximizing point, or \emph{mode} of a Gaussian kernel density estimation (KDE) centered at $p_1, …, p_n$? Given the power of KDEs in representing probability distributions and other continuous functions, the basic mode finding problem is widely applicable. However, it is poorly understood algorithmically. Few provable algorithms are known, so practitioners rely on heuristics like the ``mean-shift'' algorithm, which are not guaranteed to find a global optimum.

We address this challenge by providing fast and provably accurate approximation algorithms for mode finding in both the low and high dimensional settings. For low dimension $d$, our main contribution is to reduce the mode finding problem to a solving a small number of systems of polynomial inequalities. For high dimension $d$, we prove the first dimensionality reduction result for KDE mode finding, which allows for reduction to the low dimensional case. Our result leverages Johnson-Lindenstrauss random projection, Kirszbraun's classic extension theorem, and perhaps surprisingly, the mean-shift heuristic for mode finding.  

By combining our methods with known coreset results for KDEs, we obtain running times of $O\left(nd + \frac{d}{\eps^2\rho} (\log \frac{d}{\eps\rho})^{O(d)}\right)$ and $O\left(nd + (\log \frac{1}{\eps\rho})^{O(\frac{1}{\eps^2} \log^3\frac{1}{\eps\rho})} \right)$ to compute a $(1-\epsilon)$ approximate mode, where $\rho$ is a lower bound on the density of the true mode, which is always at least $1/n$. For $d=2$ we also present a combinatorial algorithm that improves this result to $O(n + \frac{1}{\eps^2 \rho} \log^2 \frac{1}{\eps \rho})$.
Finally, we note that our main techniques are general, and would easily extend to KDE basis functions beyond Gaussian.
\end{abstract}

\section{Introduction}
Given a point set $P$ in $\mathbb{R}^d$ and a kernel $K:\mathbb{R}^d\times \mathbb{R}^d\rightarrow \mathbb{R}$, the kernel density estimation (KDE) is a function mapping from $\mathbb{R}^d$ to $\mathbb{R}$ and is defined as $\frac{1}{\abs{P}}\sum_{p\in P}K(x,p)$ for any $x\in \mathbb{R}^d$.
One common example of kernel $K$ is the Gaussian kernel, $K(x,y) = e^{-\norm{x-y}^2}$ for any $x,y\in\mathbb{R}^d$, which is the focus of this paper.  

These kernel density estimates are a fundamental tool in statistics~\cite{Sil86,Sco92,DG84,DL01} and machine learning~\cite{SS02,GBRSS12,MFSS17}.  
For $d=1$, KDEs with a triangular kernel ($K(x,p) = \max(0,1-|x-p|)$) can be seen as the average over all shifts of a fix-width histogram.  And unlike histograms these generalize naturally to a higher dimensions as a stable way to create a continuous function to represent the measure of a finite point set.  
Indeed, the KDEs constructed on an iid sample from any tame distribution will converge to that distribution in the limit as the sample size grows~\cite{Sil86,Sco92}.  
Not surprisingly they are also central objects in Bayesian data analysis~\cite{JL95,FSG13}.  
Using Gaussian kernels (and other positive definite kernels), KDEs are members of a reproducing kernel Hilbert space~\cite{Wah99,hilbert,MFSS17} where for instance they induce a natural distance between distributions~\cite{smola,joshi2011comparing}.  
Their other applications includes outlier detection \cite{zou2014unsupervised}, clustering \cite{rinaldo2010generalized}, topological data analysis \cite{PWZ15,CFLMRW14}, spatial anomaly detection~\cite{agarwal2013union,HMP19}, and statistical hypothesis testing~\cite{GBRSS12}.

In this paper, we study how to find the maximum point of Gaussian KDE -- the mode of the distribution implied by an input point set $P \subset \mathbb{R}^d$.
It is known that Gaussian KDEs can have complex structure of local maximum~\cite{EFR12,JZBWJ16}, but other than some heuristic approaches~\cite{Carreira-Perpinan:2000,Carreira-Perpinan:2007,zheng2015error,GHP97} there has been very little prior work~\cite{PWZ15,agarwal2013union} (which we discuss shortly) in developing and formally analyzing algorithms to find this maximum.  
Beyond being a key descriptor (the mode) of one of the most common representations of a continuous distribution, finding the maximum of a KDE has many other specific applications.  
It is a necessary step to create a simplicial complex to approximate superlevel sets of KDEs~\cite{PWZ15}; to localize and track objects~\cite{CA05,SBH07}; to quantify multi-modality of distributions~\cite{Sil81}; to finding typical objects, including curves~\cite{GHP97}.

\subparagraph{Problem Definition.}
For any $x,y\in \mathbb{R}^d$, we define the Gaussian kernel as $K(x,y)=e^{-\norm{x-y}^2}$.
The Gaussian kernel density estimation $\Gb_Q(x)$ of a point set $Q$ is defined as $\Gb_Q(x) = \frac{1}{|Q|} \sum_{p \in Q} K(p,x)$, for $x \in \mathbb{R}^d$.
We will sometimes use the notation $\G_Q(x) = |Q| \cdot \Gb_Q(x)$ to simplify calculations.
In line with other works on optimization, we focus on the approximate version of the mode finding problem, defined as follows.
Given a point set $P$ of size $n$ where $\max_{x \in \mathbb{R}^d} \Gb_P(x) \ge \rho$ for some parameter $\rho$, and an error parameter $\eps > 0$, the goal is to find $x'$ such that $\Gb_P(x')\geq (1-\eps)\max_{x \in \mathbb{R}^d} \Gb_P(x)$.
We assume that the lower bound $\rho$ is known to the algorithm.
In practice, one should expect that $\rho \ll \eps$, so we aim for algorithms with far smaller dependence on $1/\rho$ than on $1/\eps$; note we can always remove the dependence on $\rho$ be setting $\rho = 1/n$.


\subparagraph{Known Results.}
One trivial approach is exhaustive search.
It is easy to see that the optimal point $x^*$ cannot be too far away from the input data.
More precisely, $x^*$ should be within the radius of $\sqrt{\log\frac{1}{\rho}}$ of a point $p$ for some $p\in P$. We will also formally show this observation in our main proof.
Given the above observation, one can construct a grid of width $\frac{1}{\eps\rho}$ around each point of input data and evaluate the value of $\G_P$ at each grid point.
This approach will allow us to output a solution with additive error at most $\eps n\rho$.
However, the size of the search space could be as large as $O\left(n \left(\sqrt{\log\frac{1}{\rho}}\bigg/\eps\rho\right)^d \right)$ which is infeasible in practice.
A similar approach is suggested by Phillips \etal~\cite{PWZ15}.

Another approach, proposed by Agarwal et al.~ \cite{agarwal2013union}, is to compute the depth in an arrangement of a set of geometric objects.
Namely, it is to find the point that maximizes the number of objects including that point.
Even though their approach generalizes to high dimensions, the authors only analyze the two-dimensional case.
They consider a set $\mathcal{S}$ of segment in $\R^2$ and, for any $x\in\R^2$ and $s\in\mathcal{S}$, define $K(x,s) = K(x,y)$ where $y$ is the closest point on $s$ to $x$.
In order to understand how this setting translate into ours, one should think of $p\in P$ as a degenerated segment.
By discretizing the continuous function $K$ into the level set of it, one can view the problem as computing the depth in an arrangement of a collection of level set.
This approach has a running time of $O(\frac{n}{\eps^4}\log^3n)$ (this implicitly sets $\rho = 1/n$).

\subparagraph{Related Work.}
As mentioned before, computing depth in an arrangement of a set of geometric object is highly related to our problem.
Given a collection $\mathcal{C}$ of geometric object in $\R^d$, one can expressed the depth as $\sum_{c\in \mathcal{C}}\mathbb{1}_c(x)$ where $\mathbb{1}_c(x)$ is the indicator function of $x\in c$.
It is easy to see that KDE is basically the same formula by replacing $\mathbb{1}_c(x)$ with $K(x,p)$.
Namely, one can view finding maximum point of KDE as computing ``fractional'' depth of kernel.
Surprisingly, there are not many non-trivial algorithmic results on computing the depth of high-dimensional geometric objects.  
In general, whenever $\mathcal{C}$ is a collection of bounded complexity (e.g.~VC dimension~\cite{VC71}) objects, the arrangement is always of complexity $O(n^{O(d)})$ and it can be constructed, with the depth encoded, in as much time.
For instance, a celebrated improvement is when $\mathcal{C}$ is a collection of axis parallel box, the point of maximum depth can be found in $O(n^{d/2 - o(1)})$ time~\cite{chan2013klee}.
In these cases, for our task we can run such approaches on a sample of size $n_0 = O(\frac{d}{\eps^2 \rho}\log \frac{1}{\rho})$~\cite{LLS01,har2011relative}, so the runtimes still have a $1/\rho^{O(d)}$ term.

\subparagraph{Our Approach and Result.}
We present an approximation scheme that reads the data (to sample it) in $O(nd)$ time, and then its runtime depends only on $1/\eps$ and $1/\rho$.  
At the heart of our algorithm are two techniques: dimensionality reduction and polynomial system solving.
We also use standard coreset results for Gaussian kernel density estimates.

For dimensionality reduction (Section~\ref{sec:JL+KDS}), we use Johnson-Lindenstrauss matrices to project the point set down to low dimensions, and solve the problem in low dimensions.
The crucial issue is, if we solve the mode finding problem in the low dimensional space, it is not immediately clear that the original high dimensional space also has a point that gives a high KDE value.
We resolve this with an application of Kirszbraun's extension theorem~\cite{Kirszbraun:1934,Valentine:1945}, which shows the existence of such a high dimensional point.
To find the actual point in the high dimensional space, we use one step of the \emph{mean-shift} algorithm~\cite{Carreira-Perpinan:2000, Carreira-Perpinan:2007}, which is a known heuristic for the KDE maximum finding problem with provable monotonicity properties. We note that we could have alternatively combined a \emph{terminal dimensionality reduction} result \cite{Narayanan:2019} our mean-shift recovery strategy. Doing so would have given the same level of dimensionality reduction, at the expense of reduced simplicity and runtime efficiency.

In low dimensions, we consider Taylor series truncations of the Gaussian kernel, and reduce the mode finding problem to solving systems of polynomial inequalities (Section~\ref{sec:d-small}).
The result of Renegar~\cite{renegar1992computational} implies that one can find a solution to a system of $\lambda$ polynomial inequalities with degree $D$ and $k$ variables in time $O((\lambda D)^{O(k)})$.
Here, $k$ will essentially be the dimensionality $d$ of the problem, and $\lambda$ will be a constant as shown in our constructions.
We observe that since the optimal point must be close to one of the points in the input, we can consider a sufficiently fine grid in the vicinity of each input point, which totals to $O(n2^{O(d)})$ grid points.
For each grid point, we formulate and solve a system of polynomial inequalities based on Taylor expansions, up to $O(\log\frac{1}{\rho})$ terms around that grid point.
This gives a running time of $O(n(\log\frac{1}{\rho})^{O(d)})$, where $n$ is the size of the input point set.

Combining the above ideas with standard coreset results yields approximation schemes for the KDE mode finding problem.
We present two such schemes, one with exponential runtime dependence on the dimensionality $d$ which is more suitable for low dimensions, and another with only linear dependence in $d$ (which is necessary for reading the input) and is designed for the high dimensional regime.
Guarantees of these approximations are captured by Theorems~\ref{thm:low} and~\ref{thm:high} respectively, which we prove in Section~\ref{sec:finalschemes}.

\begin{theorem}[Low dimensional regime]
	\label{thm:low}
	Given $\eps,\rho>0$ and a point set $P\subset\mathbb{R}^d$ of size $n$, we can find $x'\in\mathbb{R}^d$ so
	\[
	\Gb_P(x') \geq (1-\eps)\Gb_P(x^*)
	\]
	if $\Gb_P(x^*)\geq \rho$ where $x^*=\argmax_{x\in\mathbb{R}^d} \Gb_P(x)$ in $O\left(nd+\frac{d}{\eps^2\rho} \cdot \log\frac{1}{\rho\delta} \cdot \left(\log \frac{d}{\eps\rho}\right)^{O(d)}\right)$ time with probability $1-\delta$.
\end{theorem}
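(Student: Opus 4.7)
The plan is to combine three ingredients: a Gaussian KDE coreset that reduces the runtime's dependence on $n$ to an additive $O(nd)$, a localization argument that confines the mode to a bounded region near the data, and Renegar's~\cite{renegar1992computational} algorithm for systems of polynomial inequalities applied to truncated Taylor expansions of the kernel. The $O(nd)$ term absorbs the cost of reading $P$ and drawing the coreset, after which the runtime is fully independent of $n$.

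First I would sample a coreset $Q \subseteq P$ of size $m = O\bigl(\tfrac{d}{\eps^2\rho}\log\tfrac{1}{\rho\delta}\bigr)$ such that $\sup_x \abs{\Gb_Q(x) - \Gb_P(x)} \le \eps\rho/4$ with probability $1-\delta$, using standard VC-based uniform-sampling guarantees for Gaussian KDEs. Since $\Gb_P(x^*) \ge \rho$, it suffices to find $x'$ that approximately maximizes $\Gb_Q$ to additive error $\eps\rho/2$, which will yield the required multiplicative $(1-\eps)$ guarantee against $\Gb_P(x^*)$. To localize, I observe that any $x$ with $\Gb_Q(x) \ge \rho/2$ must lie within distance $R = O(\sqrt{\log(1/\rho)})$ of some $p \in Q$, since otherwise every summand $e^{-\|x-p\|^2}$ is below $\rho/2$. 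Covering the radius-$R$ ball around each $p \in Q$ by axis-aligned cells of constant side length produces $N = m \cdot (\log(1/\rho))^{O(d)}$ candidate cells whose union contains the mode.

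Within each cell I would replace $\G_Q$ by the sum of the degree-$D$ Taylor truncations of $K(\cdot, p_i)$ around the cell center, with $D = O(\log(d/\eps\rho))$; because the cell has $O(1)$ diameter and $K \le 1$, standard remainder bounds for the entire function $e^{-t}$ ensure that the summed approximation error is uniformly below $\eps\rho/8$. The result is a single polynomial $f$ in $d$ variables of degree $O(D)$, and approximately maximizing $f$ over the cell reduces, by binary search on a threshold $\tau$, to deciding feasibility of $\{f(x) \ge \tau\} \cap \{x \in \mathrm{cell}\}$. This is a constant number of polynomial inequalities of degree $O(D)$ in $d$ variables, which by Renegar~\cite{renegar1992computational} is solvable in $(\log(d/\eps\rho))^{O(d)}$ time per cell. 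Taking the best $x'$ over all $N$ cells and paying the sampling cost gives the claimed runtime $O\bigl(nd + m \cdot (\log(d/\eps\rho))^{O(d)}\bigr)$.

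The main obstacle is coordinating the three error budgets. The cell size, the Taylor degree $D$, and the Renegar-side optimization tolerance must be chosen so that the coreset error $\eps\rho/4$, the Taylor error $\eps\rho/8$, and the feasibility-search error $\eps\rho/8$ compose into the additive $\eps\rho/2$ bound that yields the $(1-\eps)$ multiplicative guarantee. Keeping $D$ at $O(\log(d/\eps\rho))$ while uniformly controlling the remainder \emph{summed over all $m$ points in $Q$} requires exploiting that each $K(x,p_i) \le 1$ and that the Gaussian's Taylor remainder in a cell of squared radius $r^2$ decays factorially once $D \gg r^2$; this is where the main quantitative work of the proof lies.
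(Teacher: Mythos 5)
Your overall architecture---a coreset, localization of the mode to within $O(\sqrt{\log(1/\rho)})$ of the data, a grid of cells around the data, a per-cell Taylor truncation of degree $O(\log\frac{d}{\eps\rho})$, and binary search over Renegar feasibility tests---is essentially the paper's (Observation~\ref{lem:union}, $\textsf{SysPoly}$, Algorithm~\ref{alg:sys_poly}, Theorem~\ref{thm:ig-alg}). However, your first step contains a genuine quantitative error. A uniform additive guarantee $\sup_x\abs{\Gb_Q(x)-\Gb_P(x)}\le\eps\rho/4$ cannot be obtained from a random sample of size $O\bigl(\frac{d}{\eps^2\rho}\log\frac{1}{\rho\delta}\bigr)$: estimating a quantity in $[0,1]$ that may be of constant size to additive accuracy $\eps\rho$ requires $\Omega(1/(\eps\rho)^2)$ samples, so ``standard VC-based uniform sampling'' at your stated size only gives additive error on the order of $\eps\sqrt{\rho}$. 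If you repair this by taking the sample size that a uniform additive bound actually needs, the runtime becomes $\tilde{O}\bigl(\frac{d}{\eps^2\rho^2}\bigr)\cdot\bigl(\log\frac{d}{\eps\rho}\bigr)^{O(d)}$ and no longer matches the theorem. The $1/\rho$ (rather than $1/\rho^2$) dependence comes from a \emph{relative} $(\rho,\eps)$-approximation (the paper uses Zheng--Phillips; cf.\ \cite{har2011relative}): with $O\bigl(\frac{d}{\eps^2\rho}(\log\frac1\rho+\log\frac1\delta)\bigr)$ samples one gets $\abs{\Gb_P(x)-\Gb_Q(x)}\le\eps\max\{\Gb_P(x),\rho\}$, and the error chain must then be multiplicative at the high end---which still closes because $\max\{\Gb_P(x'),\rho\}\le\Gb_P(x^*)$, exactly as in the paper's proof of Theorem~\ref{thm:main_low}. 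So the gap is fixable, but the coreset guarantee you invoke, at the size you state, is false, and your purely additive error budget has to be reworked accordingly.

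Two secondary issues. Your remainder justification ``because the cell has $O(1)$ diameter and $K\le1$'' controls the wrong quantity: the truncation error depends on the distance from $x$ to the data points kept in the local sum, which can be $\Theta\bigl(\sqrt{\log\frac{1}{\eps\rho}}\bigr)$, so the exponent can be $\Theta(\log\frac{1}{\eps\rho})$; degree $O(\log\frac{d}{\eps\rho})$ still suffices, but only after the factorial-versus-$(\log\frac{1}{\eps\rho})^{D}$ computation the paper carries out in Lemma~\ref{lem:sys_poly_guar}. Also, you never say which points enter each cell's polynomial: summing over all of $Q$ in every cell already costs $\Omega(m^2)$ just to write the polynomials, which exceeds the claimed runtime for constant $d$; you need to restrict each cell to points within $O\bigl(\sqrt{\log\frac{1}{\eps\rho}}\bigr)$ (so each point is charged to $(\log\frac{1}{\eps\rho})^{O(d)}$ cells) and then account for the dropped far points, whose contribution is $(\eps\rho)^{O(1)}$ each, as in the proof of Theorem~\ref{thm:ig-alg}.
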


\begin{theorem}[High dimensional regime]
	\label{thm:high}
	Given $\eps,\rho>0$ and a point set $P\subset\mathbb{R}^d$ of size $n$, we can find $x'\in\mathbb{R}^d$ so 
	\[
	\Gb_P(x') \geq (1-\eps)\Gb_P(x^*)
	\]
	if $\Gb_P(x^*)\geq \rho$, with probability at least $1-\delta$, where $x^*=\argmax_{x\in\mathbb{R}^d} \Gb_P(x)$ in time
	$O\left(nd + \left(\log \frac{1}{\eps \rho } \right)^{O(\frac{1}{\eps^2}\log^3\frac{1}{\eps\rho})} \cdot \log\frac{1}{\delta} + 
	            \min\{nd \log \frac{1}{\delta}, \frac{d}{\eps^2\rho^2}\log^2\frac{1}{\delta}\}\right)$.
\end{theorem}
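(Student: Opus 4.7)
The plan is to chain together four ingredients alluded to in the introduction: (i) a coreset $Q$ that replaces $P$; (ii) a Johnson--Lindenstrauss (JL) projection $\Pi\in\R^{k\times d}$ down to target dimension $k=O(\tfrac{1}{\eps^2}\log^3\tfrac{1}{\eps\rho})$; (iii) the low-dimensional algorithm of Theorem~\ref{thm:low} applied to $\Pi Q$ in $\R^k$; and (iv) a Kirszbraun-plus-mean-shift lifting step that maps the low-dimensional approximate mode back into $\R^d$. The theorem will follow by bounding the approximation error contributed by each piece and summing the three running-time contributions.

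First, I would build a coreset $Q\subseteq P$ of size $O(\tfrac{1}{\eps^2\rho^2}\log\tfrac{1}{\delta})$ by uniform subsampling: a Hoeffding bound together with a union bound over a net of candidate modes gives $|\Gb_Q(x)-\Gb_P(x)|\leq O(\eps\rho)$ uniformly, which is a $(1\pm\eps)$ multiplicative approximation wherever the density is $\Omega(\rho)$. Reading $P$ costs $O(nd)$; processing $Q$ accounts for the final $O(\min\{nd\log\tfrac{1}{\delta},\, \tfrac{d}{\eps^2\rho^2}\log^2\tfrac{1}{\delta}\})$ term in the claim (via success amplification by $O(\log\tfrac{1}{\delta})$ independent trials). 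Next I draw a JL matrix $\Pi$ with $k=O(\tfrac{1}{\eps^2}\log^3\tfrac{1}{\eps\rho})$ rows. Since $x^\star$ must lie within radius $O(\sqrt{\log 1/\rho})$ of some point of $Q$ (otherwise $\Gb_P(x^\star)<\rho$), the JL distance-preservation lemma combined with the $O(\log\tfrac{1}{\eps\rho})$-Lipschitzness of $z\mapsto e^{-z}$ in that range gives $|\Gb_{\Pi Q}(\Pi x)-\Gb_Q(x)|\leq O(\eps\rho)$ for every relevant candidate $x$; the $\log^3\tfrac{1}{\eps\rho}$ factor in $k$ is precisely what is needed to absorb this Lipschitz amplification. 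Applying Theorem~\ref{thm:low} to $\Pi Q$ in dimension $k$ then produces $\tilde y\in\R^k$ with $\Gb_{\Pi Q}(\tilde y)\geq (1-\eps)\max_y \Gb_{\Pi Q}(y)$; substituting $d\mapsto k$ into its runtime yields the middle term $(\log\tfrac{1}{\eps\rho})^{O(k)}\log\tfrac{1}{\delta}$.

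To lift $\tilde y$ back to $\R^d$ I invoke Kirszbraun's extension theorem: because the restriction of $\Pi$ to $Q$ is a $(1+\eps)$-near-isometry after JL, the map $\Pi q\mapsto q$ extends to a $(1+\eps)$-Lipschitz $\phi:\R^k\to\R^d$, and the point $y^\star:=\phi(\tilde y)$ satisfies $\|y^\star-q\|\leq (1+\eps)\|\tilde y-\Pi q\|$ for every $q\in Q$, so $\Gb_Q(y^\star)\geq (1-O(\eps))\Gb_{\Pi Q}(\tilde y)$. We never compute $y^\star$; instead we output the single mean-shift update
\[
x' \;=\; \frac{\sum_{q\in Q} w_q\, q}{\sum_{q\in Q} w_q},\qquad w_q \;=\; K(\Pi q,\tilde y),
\]
using the JL-approximated weights. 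Standard mean-shift monotonicity (a Jensen argument via concavity of $\log$) would give $\Gb_Q(\bar x)\geq \Gb_Q(y^\star)$ for $\bar x=\sum_q \bar w_q q/\sum_q\bar w_q$ with the ideal weights $\bar w_q=K(q,y^\star)$; JL together with Kirszbraun ensures $w_q=(1\pm\eps)\bar w_q$, and by a careful comparison this transfers to $\Gb_Q(x')\geq (1-O(\eps))\Gb_Q(y^\star)$. Chaining all the $(1\pm\eps)$ factors and rescaling $\eps$ by a constant yields the desired $\Gb_P(x')\geq(1-\eps)\Gb_P(x^\star)$.

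The main obstacle I anticipate is exactly this final comparison: we only have access to mean-shift weights evaluated in the projected space, yet we need to guarantee that the output $x'$ is almost as good as what mean-shift would produce from the unknown Kirszbraun witness $y^\star$. The cleanest route I see is to combine the JL/Kirszbraun control $|w_q-\bar w_q|\leq\eps\bar w_q$ with mean-shift monotonicity and a log-concavity-style inequality for sums of Gaussians to show that the KDE value at the weighted average is robust to multiplicative weight perturbation. The other ingredients---the coreset Hoeffding bound, JL distance preservation on the ball of radius $O(\sqrt{\log 1/\rho})$ around $Q$, and the direct application of Theorem~\ref{thm:low}---plug in routinely, and their runtime contributions sum to the stated bound.
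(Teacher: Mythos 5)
Your overall architecture---a uniform-sample coreset of size $O(\frac{1}{\eps^2\rho^2})$, a JL projection to $m=O(\frac{1}{\eps^2}\log^3\frac{1}{\eps\rho})$ dimensions, the low-dimensional polynomial-system solver in the projected space, and a single mean-shift update with the projected weights $w_q=e^{-\norm{\tilde y-\Pi q}^2}$ to return to $\R^d$---is exactly the paper's algorithm, and the point $x'$ you output coincides with the paper's. The gap is in the step you yourself flag as the main obstacle, and the fix you propose does not go through. Kirszbraun, applied to the map $\Pi q\mapsto q$ (which is $1$-Lipschitz under the paper's one-sided JL guarantee, or $(1+\eps)$-Lipschitz under a two-sided one), gives only the \emph{one-sided} bound $\norm{y^\star-q}\leq(1+\eps)\norm{\tilde y-\Pi q}$; there is no lower bound on $\norm{y^\star-q}$ in terms of $\norm{\tilde y-\Pi q}$. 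Consequently the ideal weight $\bar w_q=e^{-\norm{y^\star-q}^2}$ can exceed the computable weight $w_q$ by an unbounded factor (e.g.\ if the extension places $y^\star$ essentially on top of some $q$ whose projection is far from $\tilde y$), so the premise $w_q=(1\pm\eps)\bar w_q$ of your weight-perturbation argument is false in general, and the ``robustness of the KDE value at the weighted average to multiplicative weight perturbation'' lemma you hope to invoke is not available. (Even if the weights were multiplicatively close, converting a perturbation of the weighted mean into only a $(1-O(\eps))$ relative loss in KDE value would itself require a nontrivial argument, since reweighting points at distance $\Theta(\sqrt{\log\frac{1}{\eps\rho}})$ can shift the mean by amounts that change Gaussian weights by large factors.)

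The paper's proof avoids comparing the two weight vectors altogether. Since the Kirszbraun witness satisfies $\norm{y^\star-q}\leq\norm{\tilde y-\Pi q}$ for every $q$, one can lift each $q$ to $\bar q\in\R^{d+1}$ by appending the coordinate $\sqrt{\norm{\tilde y-\Pi q}^2-\norm{y^\star-q}^2}$, and lift $y^\star$ by appending $0$. In $\R^{d+1}$ the distance from the lifted $y^\star$ to $\bar q$ is then \emph{exactly} $\norm{\tilde y-\Pi q}$, so the weights your algorithm actually computes are the exact Gaussian weights of the augmented instance. Mean-shift monotonicity (Claim~\ref{claim:monotonic_mean_shift}) applied in $\R^{d+1}$ gives $\bar\G_Q(\bar x')\geq\G_{\Pi Q}(\tilde y)$, where $\bar x'$ is the augmented mean-shift update whose first $d$ coordinates are your $x'$; dropping the last coordinate only decreases distances to the data, so $\G_Q(x')\geq\bar\G_Q(\bar x')\geq\G_{\Pi Q}(\tilde y)\geq(1-O(\eps))\max_x\G_Q(x)$, which is the chain you need, with no comparison between $w_q$ and $\bar w_q$. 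If you replace your weight-perturbation step with this augmentation argument, the rest of your outline (first coreset, JL applied to the finite set $Q\cup\{x^*\}$ only---not to ``every relevant candidate $x$''---Kirszbraun for the upper bound on the projected maximum, and Theorem~\ref{thm:low} run in dimension $m$) matches the paper and yields the claimed runtime after $O(\log\frac{1}{\delta})$ constant-success repetitions.
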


One may set the relative error parameter $\eps$, and failure probability $\delta$ to constants, and observe the mode of $\Gb_P$ must be at least $1/n$ and set $\rho = 1/n$.  Then the runtimes become $O(n (\log n)^{O(d)})$ for constant dimensions, and $O(nd + (\log n)^{O(\log^3 n)})$ in high dimensions.

In addition to our result in Theorems~\ref{thm:low} and~\ref{thm:high}, we also consider the special case where $d = 2$.
We present a combinatorial algorithm for the 2-dimensional regime which is undeniably easier to implement.
Here, we borrow the idea from \cite{agarwal2013union} which is to compute the depth.
Instead of simply considering the level sets of the Gaussian kernel (which are circles in our setting), we consider a more involved decomposition.
One important property of the Gaussian kernel is its multiplicatively separability -- namely, the Gaussian kernel can be decomposed into factors, with one factor for each dimension.
We now discretize each factor into level sets (which are simply intervals) and then consider their Cartesian products, generating a collection of axis-parallel rectangles.
A similar idea was also suggested in \cite{phillips2018improved}.
Finally, if we compute the depth of this collection of axis-parallel rectangles, we can find out an approximate mode in time $O(\frac{1}{\eps^2\rho}\log^2\frac{1}{\rho})$.
Note that this approach also works on higher dimensional case, but it would yield a slower running time than our general approaches in Theorems~\ref{thm:low} and~\ref{thm:high}.
The formal guarantees of this 2-d algorithm is captured in Theorem~\ref{thm:2d}.

\begin{theorem}[2-dimensional setting]
	\label{thm:2d}
	Given $\eps,\rho>0$ and a point set $P\subset\mathbb{R}^2$ of size $n$, we can find $x'\in\mathbb{R}^2$ so
	\[
	\Gb_P(x')\geq (1-\eps)\Gb_P(x^*)
	\]
	if $\Gb_P(x^*)\geq \rho$ where $x^*=\argmax_{x\in\mathbb{R}^d} \Gb_P(x)$ in $O\left(n + \frac{1}{\eps^2\rho}(\log \frac{1}{\rho}+\log \frac{1}{\delta})\log (\frac{1}{\eps\rho}\log \frac{1}{\delta})\right)$ time with probability at least $1-\delta$.
\end{theorem}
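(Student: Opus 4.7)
The plan is to combine three ingredients: a Gaussian KDE coreset, the multiplicative separability of the Gaussian kernel in $\R^2$, and a sweep-line algorithm for maximum weighted depth of axis-parallel rectangles. First, apply a standard uniform-sampling coreset for Gaussian KDEs (the sublevel sets of $K$ form a bounded-VC range space, so a relative $(\eps,\rho)$-approximation holds): with probability $\geq 1-\delta$, a uniform sample $Q \subseteq P$ of size $m = O\bigl(\frac{1}{\eps^2 \rho}(\log \frac{1}{\rho} + \log \frac{1}{\delta})\bigr)$ satisfies $\sup_x \lvert \Gb_Q(x) - \Gb_P(x) \rvert \leq \eps \rho / 8$; this step costs $O(n+m)$ time and reduces the task to $(1-O(\eps))$-approximating $\max_x \Gb_Q(x)$.

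Next, exploit the factorization $K(x,q) = e^{-(x_1-q_1)^2}\, e^{-(x_2-q_2)^2}$. For each $q \in Q$ and each axis $j \in \{1,2\}$, place $L = O(\log \frac{1}{\eps\rho})$ cuts on the $x_j$-line where the factor $e^{-(x_j-q_j)^2}$ crosses a geometric sequence of thresholds down to $\Theta(\eps\rho/m)$; this carves the axis into nested level-set intervals, each tagged with a discretized value. Taking Cartesian products with the partition from the other axis yields weighted axis-parallel rectangles whose weighted indicator-sum approximates $K(x,q)$ within a multiplicative $(1 \pm O(\eps))$ factor inside the truncation disk, while contributions outside the disk add at most $O(\eps\rho)$ per point to $\Gb_Q$ in total. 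By sharing cuts across pairs so that each $q$ contributes only $O(L)$ effective rectangles (rather than the naive $O(L^2)$), the overall rectangle count is $N = O(mL) = O\bigl(\frac{1}{\eps^2 \rho}(\log \frac{1}{\rho} + \log \frac{1}{\delta}) \log \frac{1}{\eps\rho}\bigr)$. Letting $\widetilde{\G}(x)$ denote the sum of weights of rectangles containing $x$, the construction ensures $\widetilde{\G}(x) = (1 \pm O(\eps))\,\G_Q(x) \pm O(\eps m \rho)$ for every $x$ in the localization region within distance $\sqrt{\log(1/\rho)}$ of some point of $Q$, where the mode of $\Gb_Q$ must lie.

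Finally, maximize $\widetilde{\G}$ via a standard sweep-line algorithm: sort the $2N$ vertical rectangle boundaries by $x$-coordinate and, at each event, perform range updates/queries on a segment tree indexed by the $y$-coordinates of the event endpoints, tracking the maximum weighted depth along the scan line in $O(\log N)$ time per event. Returning the argmax point $x'$ gives, after chaining the multiplicative discretization error, the additive truncation error, and the coreset error, $\Gb_P(x') \geq (1-\eps)\Gb_P(x^*)$. The total running time is $O(n + N \log N) = O\bigl(n + \frac{1}{\eps^2\rho}(\log \frac{1}{\rho} + \log \frac{1}{\delta}) \log(\frac{1}{\eps\rho} \log \frac{1}{\delta})\bigr)$, matching the claimed bound. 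The main obstacle is the second step: keeping the effective rectangle count at $O(mL)$ rather than $O(mL^2)$ while preserving a $(1\pm\eps)$ multiplicative approximation of each $K(x,q)$ after multiplying two approximated 1D factors, and correctly propagating the multiplicative discretization error together with the additive truncation/coreset error into a relative $(1-\eps)$ guarantee on $\Gb_P(x^*)$.
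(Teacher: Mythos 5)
Your high-level skeleton (exploit separability of the Gaussian into per-axis factors, reduce to axis-parallel rectangles, subsample, then run an $O(N\log N)$ max-depth routine) is the same family of ideas as the paper, but the step you yourself flag as the main obstacle is a genuine gap, and as parameterized it cannot work. First, a geometric sequence of thresholds with only $L=O(\log\frac{1}{\eps\rho})$ levels down to $\Theta(\eps\rho/m)$ has a constant ratio between consecutive levels, so the per-axis step function approximates $e^{-(x_j-q_j)^2}$ only up to a constant multiplicative factor, not $1\pm O(\eps)$; to get $1\pm O(\eps)$ per factor you need ratio $1+\Theta(\eps)$, i.e.\ $L=\Theta(\frac{1}{\eps}\log\frac{1}{\eps\rho})$ levels. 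Second, the claimed compression from $O(L^2)$ to $O(L)$ rectangles per point is not achievable: any weighted sum of $T$ axis-parallel rectangles restricted to a line through $q$ is a step function with at most $2T+1$ values, and a $(1\pm\eps)$-multiplicative approximation of a function decaying continuously from $1$ down to the truncation level forces $T=\Omega(\frac{1}{\eps}\log\frac{1}{\eps\rho})$; moreover, for the natural product construction, the mixed second difference of $\bigl(\sum_j w_j\mathbb{1}_{I_j}\bigr)\bigl(\sum_k w_k\mathbb{1}_{J_k}\bigr)$ is nonzero at all $L^2$ grid corners while a sum of $T$ rectangles has at most $4T$ corners, so no exact rewriting with $o(L^2)$ rectangles exists. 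With the honest counts, your rectangle family has size at least $\Omega\bigl(\frac{1}{\eps^2\rho}\log\frac{1}{\rho\delta}\cdot\frac{1}{\eps^2}\log^2\frac{1}{\eps\rho}\bigr)$, and the sweep then misses the claimed runtime by polylogarithmic (and $1/\eps^2$) factors. A secondary issue: a uniform sample of size $O\bigl(\frac{1}{\eps^2\rho}(\log\frac1\rho+\log\frac1\delta)\bigr)$ does not give $\sup_x\abs{\Gb_Q(x)-\Gb_P(x)}\le\eps\rho/8$ (uniform additive error $\eps\rho$ needs $\Omega(1/(\eps\rho)^2)$ samples); the correct statement at this size is the relative $(\rho,\eps)$-approximation $\abs{\Gb_P(x)-\Gb_Q(x)}\le\eps\max\{\Gb_P(x),\rho\}$, which does suffice for the chaining, so this part is fixable.

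The paper sidesteps both problems differently: it uses a fine \emph{additive} discretization of each axis factor into $m=\Theta(1/\eps\rho)$ levels, giving $m^2$ \emph{unit-weight} rectangles per input point (so the depth, normalized by $nm^2$, approximates $\Gb_P$ within additive $\frac13\eps\rho$, Lemma~\ref{lem:counting}), but it never enumerates this family. Instead it draws a relative $(\rho,\eps)$-approximation sample of the rectangles themselves, using that the dual range space of axis-parallel boxes has constant VC dimension; sampling a rectangle simultaneously samples a data point and a level pair, so a single sample of size $O\bigl(\frac{1}{\eps^2\rho}(\log\frac1\rho+\log\frac1\delta)\bigr)$ replaces both your coreset step and your per-point rectangle construction, and then Chan's max-depth algorithm runs in $O(\lambda\log\lambda)$ on unweighted rectangles. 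If you want to rescue your deterministic weighted construction, you would need either a per-point rectangle count of $\tilde O(1)$ (ruled out above) or an additional importance-sampling step over your weighted rectangles, which essentially reproduces the paper's argument.
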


\section{KDE mode finding via system of polynomial}
\label{sec:d-small}

In this section we provide algorithms that approximately find the maximum of the Gaussian KDE in $\R^d$. 
We first denote the following notation.
For a point $p\in \R^d$ and $r>0$, denote $B_{p}(r)$ be $\setdef{y\in\mathbb{R}^d}{\norm{y-p}\leq r}$, namely the Euclidean ball around $p$.
For a point set $P\subset \R^d$ and $r>0$, denote $B_P(r)$ be $\cup_{p\in P} B_{p}(r)$.
For a point set $P\subset \R^d$, a point $q\in \R^d$ and $r>0$, denote $Q_{P,q}(r) = P\cap B_q(r\sqrt{\log \frac{1}{\eps\rho}})$.
Also define an infinite grid $\Gr(\gamma)$ be $\setdef{x=(i_1\gamma, i_2\gamma, \dots,i_d\gamma)}{\text{$i_1,i_2,\dots,i_d$ are integers}}$, parametrized by an cell length $\gamma>0$.

We first make an observation that the maximum must be close to one of the data points, captured by Observation~\ref{lem:union}.
\begin{observation}\label{lem:union}
	$x^*\in B_P(\sqrt{\log \frac{1}{\rho}})$. 
	Recall that $x^*=\argmax_{x\in\mathbb{R}^d} \G_P(x)$. 
\end{observation}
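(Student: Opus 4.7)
The plan is a short contradiction argument exploiting the exponential decay of the Gaussian kernel. Suppose for contradiction that $x^* \notin B_P(\sqrt{\log(1/\rho)})$, i.e.\ $\|x^* - p\| > \sqrt{\log(1/\rho)}$ for every $p \in P$. Then each individual kernel value satisfies
\[
K(p, x^*) = e^{-\|p-x^*\|^2} < e^{-\log(1/\rho)} = \rho.
\]

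Summing over the $n = |P|$ points of $P$ and dividing by $n$ gives $\bar{\G}_P(x^*) < \rho$. But by hypothesis $\bar{\G}_P(x^*) = \max_{x \in \mathbb{R}^d} \bar{\G}_P(x) \geq \rho$, a contradiction. Hence $x^*$ must lie within distance $\sqrt{\log(1/\rho)}$ of at least one point of $P$.

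There is essentially no obstacle here; the only thing to double-check is the normalization conventions (that $\bar{\G}_P$ rather than $\G_P$ is the one bounded below by $\rho$) and the fact that the bound $\max_x \bar{\G}_P(x) \geq \rho$ is available as a trivial witness, since, e.g., $\bar{\G}_P(p) \geq 1/n$ for any $p \in P$ (so the assumption $\rho \leq \max_x \bar{\G}_P(x)$ is compatible with $\rho \geq 1/n$). The key structural fact being used is simply that a single Gaussian kernel $K(p, \cdot)$ drops below $\rho$ outside a ball of radius $\sqrt{\log(1/\rho)}$ about $p$, so if $x^*$ were outside all such balls the average could not reach $\rho$.
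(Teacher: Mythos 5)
Your proof is correct and is essentially identical to the paper's: both argue by contradiction that if $x^*$ were farther than $\sqrt{\log(1/\rho)}$ from every point of $P$, each kernel term would be strictly below $\rho$, making the (suitably normalized) KDE value at $x^*$ fall below the assumed lower bound $\rho$. The only cosmetic difference is that the paper phrases the contradiction with the unnormalized sum $\G_P(x^*) < n\rho$ versus your normalized $\Gb_P(x^*) < \rho$, which is the same statement.
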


\begin{proof}
	
	Suppose $x^*\notin B_P(\sqrt{\log \frac{1}{\rho}})$.
	Then,
	\begin{align*}
	\G_P(x^*)
	= 
	\sum_{p\in P}e^{-\norm{p-x^*}^2}
	<
	\sum_{p\in P}\rho
	= 
	n\rho
	\end{align*}
	However, $\G_P(x^*)\geq n\rho$ by assumption.	
\end{proof}

The algorithm presented in this section rely crucially on the result of Renegar for solving systems of polynomial inequalities, as stated in the following lemma.
\begin{lemma}[\cite{renegar1992computational}]\label{lem:sys_poly}
	Given $\lambda$ polynomial inequalities with degree $D$ and $k$ variables.
	There is an algorithm either finds a solution that satisfies all $\lambda$ polynomial inequalities or returns NO SOLUTION in $O((\lambda D)^{O(k)})$ time.
\end{lemma}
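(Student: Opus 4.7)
The plan is to use the critical point method from real algebraic geometry, which is the framework underlying Renegar's original argument. At a high level, I would reduce the decision problem to finding, for each consistent sign assignment on the $\lambda$ input polynomials, at least one sample point in every connected component of the corresponding semi-algebraic set. Any produced sample point that satisfies the original inequality system is returned; otherwise \textsc{No Solution} is reported. The overall complexity is then controlled by (i) the number of such sample points generated and (ii) the cost per sample point.

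To produce the sample points, I would first perturb the system by an infinitesimal $\eps$ to put it in general position, then convert strict inequalities into equalities by introducing slack variables, and consider the critical points of a generic linear projection onto one coordinate, restricted to the resulting algebraic variety. By the Oleinik--Petrovsky--Milnor--Thom type bounds, the number of connected components of a semi-algebraic set defined by $\lambda$ polynomials of degree $D$ in $k$ variables is at most $(\lambda D)^{O(k)}$, so every component is hit by at least one such critical point. Computationally, the critical points form the zero set of a polynomial system that collapses, via iterated resultants or subresultant chains, to a univariate polynomial of degree $(\lambda D)^{O(k)}$. Isolating its real roots, back-substituting coordinate by coordinate to reconstruct points in $\R^k$, and testing each candidate against the original inequalities all fit within the claimed $O((\lambda D)^{O(k)})$ time budget.

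The hard part will be the symbolic handling of the infinitesimal perturbation and the degeneracies that arise when polynomials share common factors or the defining variety is singular. One must carry $\eps$ through every univariate root-isolation step, e.g.\ using Puiseux series or Thom encodings, in a way that does not inflate the degree past $(\lambda D)^{O(k)}$, and then argue that specializing $\eps \to 0$ recovers a genuine point of the unperturbed system whenever one exists. Bounding all intermediate resultant degrees uniformly, and ensuring that the whole pipeline stays singly exponential in $k$ rather than doubly exponential as in naive cylindrical algebraic decomposition, is where the bulk of the technical work of Renegar's theorem resides; for our purposes we simply invoke the theorem as a black box.
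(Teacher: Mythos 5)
The paper does not prove this lemma; it is stated purely as a citation to Renegar's work, and the algorithmic result is used as a black box throughout Section~2. Your sketch of the critical-point method, the Oleinik--Petrovsky--Milnor--Thom bound on connected components, infinitesimal perturbation with Thom encodings/Puiseux series, and resultant-based projection to a univariate polynomial is a reasonable high-level account of the real algebraic geometry machinery underlying singly-exponential quantifier elimination and the existential theory of the reals, which is what Renegar's result rests on. But since you conclude by invoking the theorem as a black box anyway, your treatment is effectively the same as the paper's: a citation, with some additional (and essentially unverified) exposition of what the cited proof does. There is no gap to flag relative to the paper, since the paper never attempted the proof; the one thing worth being careful about is that the details you wave at---keeping all intermediate resultant degrees singly exponential in $k$, and handling the limit $\eps \to 0$ so that a perturbed witness specializes to a genuine solution---are precisely the parts that make Renegar's paper long and delicate, so the sketch should not be mistaken for a self-contained argument.
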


Before we give details of our algorithm, we present the family of systems of polynomial inequalities we formulate for mode finding.
Let $\textsf{SysPoly}(P,q,r,r',\beta)$ be the following system of polynomial. 
\[
	\sum_{p\in Q_{P,q}(r')} \prod_{i=1}^d\left( \sum_{j=0}^{s-1} \frac{1}{j!}\left(-(x_i-p_i)^2\right)^j \right)  \geq \beta \quad \bigwedge \quad \norm{x - q}^2 \leq r^2\log \frac{1}{\eps\rho}
\]
where $s= (r+r')^2e^2\log \frac{d}{\eps\rho}$.
Also, let $\textsf{SysPoly}(P,q,r,r')$ be the algorithm that performs binary search on $\beta$ of $\textsf{SysPoly}(P,q,r,r',\beta)$ and terminates when the search gap is less than $\frac{1}{10}\abs{P}\eps\rho$.
Note that $\beta$ lies between $0$ and $O(\abs{P})$ which means we need $O\left(\log \left(\abs{P}/\frac{1}{10}\abs{P}\eps\rho\right) \right) = O\left(\log \frac{1}{\eps\rho}\right)$ iterations in binary search.
The total running time of $\textsf{SysPoly}(P,q,r,r')$ is $O\left( (4s)^{O(d)}\log\frac{1}{\eps\rho} \right) = O(s^{O(d)})$ since $k=d$, $\lambda = 2$ and $D=2s$ in Lemma \ref{lem:sys_poly}.

%

The following lemma captures the approximation error from the Taylor series truncation.
\begin{lemma}\label{lem:sys_poly_guar}
	Suppose $r+r'>1$ and $q\in\R^d$ such that $\norm{x^*-q} \leq r\sqrt{\log \frac{1}{\eps\rho}}$.
	Then, the output $x^{(q)}$ of $\textsf{SysPoly}(P,q,r,r')$ satisfies
	\[
	\G_{Q_{P,q}(r')}(x^{(q)}) \geq \G_{Q_{P,q}(r')}(x^*) - \abs{P}\frac{\eps \rho}{2}
	\]
\end{lemma}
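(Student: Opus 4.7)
My plan is to show three things: (i) the polynomial being fed to Renegar, call it $F(x) := \sum_{p \in Q_{P,q}(r')} \prod_{i=1}^d \bigl(\sum_{j=0}^{s-1} \tfrac{1}{j!}(-(x_i-p_i)^2)^j\bigr)$, approximates the true kernel sum $\G_{Q_{P,q}(r')}(x)$ uniformly well on the feasible ball $\|x-q\|^2 \le r^2\log\tfrac{1}{\eps\rho}$; (ii) by the hypothesis $\|x^*-q\| \le r\sqrt{\log\tfrac{1}{\eps\rho}}$, so $x^*$ is feasible for the polynomial system; and (iii) the binary search guarantees $F(x^{(q)}) \ge F(x^*) - \tfrac{1}{10}\abs{P}\eps\rho$. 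Chaining (i)--(iii) yields the claim.

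The main work is in (i), which reduces to bounding the Taylor truncation error. Fix a feasible $x$ and some $p \in Q_{P,q}(r')$. By the triangle inequality $|x_i - p_i| \le \|x-p\| \le \|x-q\|+\|q-p\| \le (r+r')\sqrt{\log \tfrac{1}{\eps\rho}}$, so letting $u_i := (x_i - p_i)^2$ we have $u_i \le (r+r')^2\log\tfrac{1}{\eps\rho}$. By the choice $s = (r+r')^2 e^2 \log\tfrac{d}{\eps\rho}$ this gives $u_i/s \le 1/e^2 < 1$, so the partial sum $\sum_{j=0}^{s-1} \tfrac{1}{j!}(-u_i)^j$ differs from $e^{-u_i}$ by an alternating tail whose absolute value is bounded by the first omitted term $\tfrac{u_i^s}{s!}$; using $s! \ge (s/e)^s$ this is at most $(eu_i/s)^s \le e^{-s}$. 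The assumption $r + r' > 1$ makes $s \ge e^2 \log\tfrac{d}{\eps\rho}$, so $e^{-s} \le (\eps\rho/d)^{e^2}$, which is much smaller than $\eps\rho/d$.

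Next I lift this per-coordinate bound to the product. Writing each Taylor-truncated factor as $e^{-u_i} + \Delta_i$ with $|\Delta_i| \le e^{-s}$, and using $e^{-u_i} \le 1$, a standard expansion gives
\[
\Bigl|\prod_{i=1}^d (e^{-u_i}+\Delta_i) - \prod_{i=1}^d e^{-u_i}\Bigr| \;\le\; (1 + e^{-s})^d - 1 \;\le\; 2d\cdot e^{-s},
\]
since $de^{-s} \ll 1$. This bound on $|\prod_i(\cdot) - K(x,p)|$ is at most $\eps\rho/4$ provided $s \ge \log(8d/(\eps\rho))$, which is satisfied by our choice of $s$. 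Summing over the at most $|P|$ points $p \in Q_{P,q}(r')$ yields $|F(x) - \G_{Q_{P,q}(r')}(x)| \le \tfrac{1}{4}\abs{P}\eps\rho$ for every feasible $x$.

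To finish, apply this uniform bound at both $x^*$ (feasible by hypothesis) and $x^{(q)}$ (feasible as an output of the system). From the first, $F(x^*) \ge \G_{Q_{P,q}(r')}(x^*) - \tfrac{1}{4}\abs{P}\eps\rho$; the binary search stops at gap $\tfrac{1}{10}\abs{P}\eps\rho$, so $F(x^{(q)}) \ge F(x^*) - \tfrac{1}{10}\abs{P}\eps\rho$; and by the second, $\G_{Q_{P,q}(r')}(x^{(q)}) \ge F(x^{(q)}) - \tfrac{1}{4}\abs{P}\eps\rho$. Chaining these three inequalities yields $\G_{Q_{P,q}(r')}(x^{(q)}) \ge \G_{Q_{P,q}(r')}(x^*) - \abs{P}\eps\rho/2$, as required. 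The main obstacle is the product-error bound: the factor $(1+e^{-s})^d$ could blow up in high dimension, which is exactly why $s$ was defined to depend on $\log d$ (the $\log\tfrac{d}{\eps\rho}$ in its definition), ensuring $de^{-s}$ stays uniformly small.
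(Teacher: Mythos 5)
Your proof follows essentially the same route as the paper's: bound the Taylor-truncation error of the degree-$s$ polynomial uniformly over the feasible ball (the paper does this per coordinate via a Lagrange remainder plus a telescoping product decomposition, you via the alternating-series remainder plus a binomial product expansion — equivalent, and arguably slightly cleaner, bookkeeping), then use feasibility of $x^*$ together with the $\tfrac{1}{10}|P|\eps\rho$ binary-search gap and chain the inequalities, exactly as in the paper. The only slip is arithmetic at the end: $\tfrac14+\tfrac{1}{10}+\tfrac14=0.6$, so the per-point truncation error $\eps\rho/4$ you quote yields $0.6\,|P|\eps\rho$ rather than $\tfrac12|P|\eps\rho$; this is harmless, since the bound you actually derived, $2de^{-s}\le 2d(\eps\rho/d)^{e^2}$, is far below $\eps\rho/20$ per point, which restores the claimed constant.
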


We delay this technical proof to Appendix \ref{app:proofs}.  
In short, it shows that the trunctation of the above infinite summation of polynomial terms (wrapped in a sum over all points $Q$, and the product over $d$ dimensions) induces an error terms $\mathcal{E}(x^{(q)})$ and $\mathcal{E}(x^*)$ at $x^{(q)}$ and $x^*$, respectively.  We can show that the difference between these terms is at most $\eps \rho$ for our choice of $s$, as desired.

\subsection{Algorithm for Searching Polynomial Systems in Neighborhoods}

Now following Algorithm \ref{alg:sys_poly}, we create a set $\mathsf{G}_P$ of neighborhoods, defined by the subset of $\Gr\left(2\sqrt{\frac{\log \frac{1}{\eps\rho}}{d}}\right)$ which is within $4\sqrt{\log \frac{1}{\eps\rho}}$ of some point $p \in P$.  
And for each $q \in \mathsf{G}_P$ we define a neighborhood set $Q_{P,q}(4)$, and run the algorithm in Lemma \ref{lem:sys_poly}.  
Again we return the output with associated maximum $\G_{Q_{P,q}()}(\cdot)$ value.  

\begin{algorithm}[H]
	\caption{Solving System of Polynomial using an Infinite Grid}
	{\bf input}: a point set $P\subset \mathbb{R}^d$, parameter $\eps,\rho>0$
	\begin{algorithmic}[1]
		\FOR {each $p\in P$}
		\STATE insert $p$ into $Q_{P,q}(4)$ for each $q\in B_p(4\sqrt{\log \frac{1}{\eps\rho}})\cap \Gr(2\sqrt{\frac{\log \frac{1}{\eps\rho}}{d}})$
		\ENDFOR
		\STATE Let $\mathsf{G}_P$ be the $q \in \Gr(2\sqrt{\frac{\log \frac{1}{\eps\rho}}{d}})$ such that $Q_{P,q}(4)$ is non empty
		\FOR {each $q \in \mathsf{G}_P$}
		\STATE Let $x^{(q)}$ be the solution to $\textsf{SysPoly}(P,q,2,4)$ by the algorithm in Lemma \ref{lem:sys_poly}
		\ENDFOR
		\STATE \textbf{return} $x' = \argmax_{q \in \mathsf{G}_P}\G_{Q_{P,q}(4)}(x^{(q)})$
	\end{algorithmic}
	\label{alg:sys_poly}
\end{algorithm}


\begin{theorem}\label{thm:ig-alg}
	Given $\eps,\rho>0$ and a point set $P\subset\mathbb{R}^d$ of size $n$, we can find $x'\in\mathbb{R}^d$ so
	\[
	\Gb_P(x') \geq \Gb_P(x^*) - \eps\rho
	\]
	if $\Gb_P(x^*)\geq \rho$ where $x^*=\argmax_{x\in\mathbb{R}^d} \Gb_P(x)$ in $O\left( n\cdot\log n\cdot(2\sqrt{2e\pi})^d + n\cdot \left(\log \frac{d}{\eps\rho}\right)^{O(d)} \right)$ time. 
\end{theorem}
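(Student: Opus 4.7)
The plan is to prove correctness and bound the runtime separately, using Observation~\ref{lem:union} and Lemma~\ref{lem:sys_poly_guar} as the main tools. For correctness, I will exhibit a specific grid point $q^*$ that (i) lies in $\mathsf{G}_P$, (ii) is close enough to $x^*$ to invoke Lemma~\ref{lem:sys_poly_guar}, and (iii) has $Q_{P,q^*}(4)$ capturing essentially all the mass of $\G_P(x^*)$. Let $q^*$ be the grid node nearest to $x^*$. Since $\Gr(2\sqrt{\log(1/\eps\rho)/d})$ has half-diagonal $\sqrt{\log(1/\eps\rho)}$, we have $\|q^* - x^*\| \leq \sqrt{\log(1/\eps\rho)} \leq 2\sqrt{\log(1/\eps\rho)}$, matching the hypothesis of Lemma~\ref{lem:sys_poly_guar} with $r=2$. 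By Observation~\ref{lem:union}, some $p \in P$ lies within $\sqrt{\log(1/\rho)} \leq \sqrt{\log(1/\eps\rho)}$ of $x^*$, so $\|p-q^*\| \leq 2\sqrt{\log(1/\eps\rho)} \leq 4\sqrt{\log(1/\eps\rho)}$; the first for-loop then inserts $p$ into $Q_{P,q^*}(4)$, placing $q^* \in \mathsf{G}_P$.

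Next, any $p \notin Q_{P,q^*}(4)$ lies at distance at least $4\sqrt{\log(1/\eps\rho)}$ from $q^*$, and hence at least $3\sqrt{\log(1/\eps\rho)}$ from $x^*$, so $K(p,x^*) \leq (\eps\rho)^9$. Combined with Lemma~\ref{lem:sys_poly_guar} applied at $r=2,\, r'=4$ (noting $r+r'=6>1$), this chains to
\[
\G_{Q_{P,q^*}(4)}(x^{(q^*)}) \geq \G_{Q_{P,q^*}(4)}(x^*) - \tfrac{n\eps\rho}{2} \geq \G_P(x^*) - n\eps\rho,
\]
for sufficiently small $\eps\rho$ (the edge case $\eps\rho$ close to $1$ makes the theorem statement trivial). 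Since the algorithm picks $x'$ to maximize $\G_{Q_{P,q}(4)}(x^{(q)})$ over $q \in \mathsf{G}_P$, and $\G_P(x') \geq \G_{Q_{P,q'}(4)}(x^{(q')})$ by nonnegativity of the Gaussian kernel, we conclude $\G_P(x') \geq \G_P(x^*) - n\eps\rho$, which after division by $n$ is the claimed bound $\Gb_P(x') \geq \Gb_P(x^*) - \eps\rho$.

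For the runtime, each $p \in P$ is inserted into $O((2\sqrt{2e\pi})^d)$ grid cells: this count is the ratio of the volume of $B_p(4\sqrt{\log(1/\eps\rho)})$ to that of a grid cell $(2\sqrt{\log(1/\eps\rho)/d})^d$, which after the $\log(1/\eps\rho)$ factors cancel reduces, via Stirling's approximation of $\Gamma(d/2+1) \sim \sqrt{\pi d}\,(d/(2e))^{d/2}$, to $\Theta((2\sqrt{2e\pi})^d/\sqrt{d})$. With hash-table insertion costing $O(\log n)$, this gives the first term of the runtime. In the main loop, each of the at most $n(2\sqrt{2e\pi})^d$ nonempty cells triggers a polynomial-system solve with $\lambda=2$ inequalities, degree $D=2s=O(\log(d/\eps\rho))$, and $k=d$ variables, together with $O(\log(1/\eps\rho))$ binary-search iterations on $\beta$; Lemma~\ref{lem:sys_poly} bounds this at $(\log(d/\eps\rho))^{O(d)}$ per cell, and the $(2\sqrt{2e\pi})^d$ factor folds into the $O(d)$ in the exponent of $(\log(d/\eps\rho))^{O(d)}$. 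The hard part is the volume-ratio calculation: pinning down precisely the base $2\sqrt{2e\pi}$ requires a tight Stirling estimate so that the $d^{d/2}$ contributed by $\gamma^{-d}$ cancels cleanly with the one inside $\Gamma(d/2+1)$.
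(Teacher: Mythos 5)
Your proof is correct and follows essentially the same path as the paper's: locate a grid point near $x^*$ so that Lemma~\ref{lem:sys_poly_guar} applies with $r=2$, bound the tail contribution of points outside $Q_{P,q}(4)$ by a power of $\eps\rho$, and obtain the runtime via the sphere--cube volume ratio together with Lemma~\ref{lem:sys_poly}. The only (harmless) deviation is that you take $q^*$ to be the grid node nearest to $x^*$ itself and separately verify $q^*\in\mathsf{G}_P$, whereas the paper takes $q$ nearest to the witness data point $p$ from Observation~\ref{lem:union}; your choice yields $\|q^*-x^*\|\le\sqrt{\log(1/\eps\rho)}$ and a tail bound of $(\eps\rho)^9$, versus the paper's $2\sqrt{\log(1/\eps\rho)}$ and $(\eps\rho)^4$, but both suffice.
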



\begin{proof}
	We need to argue that $x^*$ must be contained in some neighborhood $B_q(2 \sqrt{\log \frac{1}{\eps \rho}})$ for some $q \in \mathsf{G}_P$, and then apply Lemma \ref{lem:sys_poly} with $r=2$ and $r'=4$.  
	This follows since, by Lemma \ref{lem:union}, $x^*\in B_p(\sqrt{\log \frac{1}{\rho}})\subset B_p(\sqrt{\log \frac{1}{\eps\rho}})$ for some $p\in P$.  Then let $q \in \mathsf{G}_P$ be the closest grid point to that point $p$; the distance $\|p-q\| \leq \sqrt{d} \gamma/2 =\sqrt{\log \frac{1}{\eps \rho}}$ with $\gamma = 2 \sqrt{\log (1/\eps \rho) / d}$.  Then by triangle inequality $\|q - x^*\| \leq \|q-p\|+\|p-x^*\| \leq 2\sqrt{\log \frac{1}{\eps \rho}}$.  
	Now, we conclude that the output of $\textsf{SysPoly}(P,p,2,4)$ satisfies 
	\begin{align*}
	\G_P(x') 
	& \geq 
	\G_{Q_{P,q}(4)}(x') 
	= 
	\G_{Q_{P,q}(4)}(x^{(q)}) 
	\geq 
	\G_{Q_{P,q}(4)}(x^*)  - \abs{P}\frac{\eps\rho}{2} \\
	& =
	\sum_{p\in P} e^{-\norm{p - x^*}^2} - \sum_{p\notin Q_{P,q}(4)} e^{-\norm{p - x^*}^2} -\abs{P}\frac{\eps\rho}{2}
	\end{align*}
	Note that $\norm{x^*-p} \geq \norm{q-p} - \norm{q-x^*} \geq 4\sqrt{\log \frac{1}{\eps\rho}} - 2\sqrt{\log \frac{1}{\eps\rho}} = 2\sqrt{\log \frac{1}{\eps\rho}}$ since $x^*\in B_q\left(2\sqrt{\log \frac{1}{\eps\rho}}\right)$ and $p\notin B_q\left(4\sqrt{\log \frac{1}{\eps\rho}}\right)$.
	\begin{align*}
	\G_P(x') 
	& \geq
	\G_P(x^*) - \abs{Q_{P,q}(4)}(\eps\rho)^4 - \abs{P}\frac{\eps\rho}{2} \\
	& \geq 
	\G_P(x^*) - \abs{P}\eps\rho & \text{for sufficient small $\eps\rho$}
	\end{align*}
	
	We now compute the running time.
	First, to construct $Q_{P,q}(4)$ (for notation convenience, we use $Q_q$ instead), for each $p\in P$, we enumerate all $q\in B_p\left(4\sqrt{\log\frac{1}{\eps\rho}}\right) \cap \Gr\left(2\sqrt{\frac{\log\frac{1}{\eps\rho}}{d}}\right)$ and insert $p$ into $Q_q$.
	Since, by considering the volume of high dimensional sphere, there are $O\left(\frac{\pi^{d/2}}{\Gamma(\frac{d}{2}+1)}\left(4\sqrt{\log\frac{1}{\eps\rho}} \bigg/ 2\sqrt{\frac{\log\frac{1}{\eps\rho}}{d}}\right)^d\right) = O\left((2\sqrt{2e\pi})^d\right)$ points in $B_p\left(4\sqrt{\log\frac{1}{\eps\rho}}\right) \cap \Gr\left(2\sqrt{\frac{\log\frac{1}{\eps\rho}}{d}}\right)$ for each $p\in P$, we have $\sum_{q \in \mathsf{G}_P} \abs{Q_q} = O(n(2\sqrt{2e\pi})^d)$ and also there are only $O(n(4\sqrt{2e\pi})^d)$ non empty $Q_q$.
	Here, $\Gamma$ is the gamma function and we use the fact of $\Gamma(x+1) \geq (\frac{x}{e})^x$.
	It is easy to construct a data structure to insert all $p$ into all of the corresponding $Q_q$ in $O\left(n(2\sqrt{2e\pi})^d\log\left(n(2\sqrt{2e\pi})^d\right)\right) = O\left(n(2\sqrt{2e\pi})^d(\log n+d)\right)$.
	Let $s= 36e^2\log \frac{d}{\eps\rho}$.
	We now can precompute each polynomial $\prod_{i=1}^d\left( \sum_{j=0}^{s-1} \frac{1}{j!}\left(-(x_i-p_i)^2\right)^j \right)$ in $O(d(2s)^d)$ time for each $p\in P$ which takes $O(nd(2s)^d)$ total time to compute all of them.
	For each $q \in \mathsf{G}_P$, it takes $O(\abs{Q_q}(2s)^d)$ to construct the polynomial $\sum_{p\in Q_q} \prod_{i=1}^d\left( \sum_{j=0}^{s-1} \frac{1}{j!}\left(-(x_i-p_i)^2\right)^j \right)$ and $O(s^{O(d)})$ time to solve the system of polynomial as suggested in Lemma \ref{lem:sys_poly}.
	Therefore, the total running time is 
	\begin{align*}
	\MoveEqLeft	O\left( n (2\sqrt{2e\pi})^d(\log n+d) + n d(2s)^d + \sum_{q \in \mathsf{G}_P}(\abs{Q_q}(2s)^d + s^{O(d)} \right) 
	\\&= 
	O\left( n\cdot\log n\cdot(2\sqrt{2e\pi})^d + n\cdot \left(\log \frac{d}{\eps\rho}\right)^{O(d)} \right) \qedhere
	\end{align*}
\end{proof}

\section{Dimensionality reduction for KDE mode finding}
\label{sec:JL+KDS}
In the previous section we described an algorithm for finding the mode, or maximizing point, of a Gaussian kernel density estimate when the data dimension $d$ is low. The approach becomes intractable in high dimensions because it has an exponential dependence on $d$.
In this section, we show how to avoid that dependence by providing, to the best of our knowledge, the first dimensionality reduction result for KDE mode finding. 

In particular, by leveraging Kirszbraun's extension theorem, we prove that compressing $P = \{{p}_1,\ldots,p_n\}$ using a Johnson-Lindenstrauss random projection to $O\left(\log n \log^2(1/\eps\rho)/\eps^2\right)$ dimensions preserves the mode of the KDE with centers in $P$, to a $(1 - \eps)$ factor.
Crucially, we then 
show that it is possible to {recover} an approximate mode for $P$ from a solution to the low dimensional problem by applying a \emph{single iteration} of the mean-shift algorithm. 

In Section \ref{sec:d-large} we combine this result with our low dimensional algorithm from Section \ref{sec:d-small} and existing coreset results for KDEs (which allow us to eliminate the $\log n$ dependence) to give our final algorithm for high-dimensional mode finding. We first present the dimensionality reduction result in isolation as, like dimensionality reduction strategies for other computational hard problems \cite{CohenElderMusco:2015,Makarychev:2019,Becchetti:2019}, it could  in principal
be combined with any other heuristic or approximate mode finding method. For example, we expect a practical strategy would be to solve the low-dimensional problem using the mean-shift heuristic.

We need one basic definition before outlining our approach in Algorithm \ref{alg:dimensionality_reduction}.

\begin{definition}[$(\gamma,k,\delta)$-Johnson-Lindenstrauss Guarantee]
	\label{def:jl}
	A randomly selected matrix $\Pi \in \R^{m\times d}$ satisfies the $(\gamma,k,\delta)$-JL Guarantee if, for any $k$ data points $v_1, \ldots, v_k \in \R^d$,
	\begin{align*}
	\|v_i - v_j\| \leq \|\Pi v_i - \Pi v_j\| \leq ( 1+ \gamma)\|v_i - v_j\| ,
	\end{align*}
	for all pairs $i,j \in 1,\ldots, k$ simultaneously, with probability $(1-\delta)$.
\end{definition}
Definition \ref{def:jl} is satisfied by many possible constructions. When $\Pi$ is a properly scaled random Gaussian or sign matrix, it satisfies the $(\gamma,k,\delta)$-JL guarantee as long as $m = O(\log(k/\delta)/\gamma^2)$ \cite{DasguptaGupta:2003,Achlioptas:2003}. In this case, $\Pi$ can be multiplied by a $d$ dimensional vector in $O(md)$ time. For simplicity, we assume such a construction is used in our algorithm. Other constructions, including fast Johnson-Lindenstrauss transforms \cite{AilonChazelle:2009,AilonLiberty:2013,KrahmerWard:2011} and sparse random projections \cite{KaneNelson:2014,CohenJayramNelson:2018} satisfy the definition with slightly larger $m$, but faster multiplication time. Depending on problem parameters, using such constructions may lead to a slightly faster overall runtime.

\begin{algorithm}[H]
	\caption{Dimensionality Reduction for KDE mode finding}
	{\bf input}: a set of $n$ points $P\subset \R^d$, parameters $\eps, \delta > 0$, $\rho$ such that $\max_x G_P(x) \geq \rho n$\\
	{\bf output}: a point $x' \in \R^d$ satisfying  $G_P(x')\geq(1-\epsilon)\max_x G_P(x)$ with prob. $1-\delta$
	
	\begin{algorithmic}[1]
		\STATE Set $\gamma = \frac{\epsilon}{4\log(4/\epsilon \rho)}$.
		\STATE Choose a random matrix $\Pi \in \R^{m\times d}$ satisfying the $(\gamma,n+1,\delta)$-JL guarantee (Defn. \ref{def:jl}).
		\STATE For each $p_i \in P$, compute $\Pi p_i$ and let $\Pi P$ denote the data set $\{\Pi p_1, \ldots, \Pi p_n\}$
		\STATE Using an algorithm for mode finding in low dimensions (e.g. Algorithm~\ref{alg:sys_poly}) find a point $x''$ satisfying $\G_{\Pi P}(x'') \geq (1-\eps/2)\max_{x \in \R^m} \G_{\Pi P}(x)$.
		\STATE \textbf{return} $x' = \frac{\sum_{p\in P} p \cdot e^{-\norm{x'' - \Pi p}^2} }{\sum_{p\in P} e^{-\norm{x'' - \Pi p}^2}}$ \label{line:mean_shift}
	\end{algorithmic}
	\label{alg:dimensionality_reduction}
\end{algorithm}

\begin{theorem}\label{thm:dim_reduction}
	With probability $(1-\delta)$, Algorithm \ref{alg:dimensionality_reduction} returns an $x'$ satisfying $\G_P(x') \geq (1-\epsilon)\max_x \G_P(x)$. When implemented with a random Rademacher or Gaussian $\Pi$, the algorithm runs in time
	$O\left(ndm\right) + T_{m,(1-\epsilon/2)}$,
	where $m = O\left(\frac{\log(n/\delta)\log^2(1/\epsilon\rho)}{\epsilon^2}\right)$ and $T_{m,(1-\epsilon)}$ is the time required to compute a $(1-\epsilon/2)$ approximate mode for an $O\left(m\right)$ dimension dataset.
\end{theorem}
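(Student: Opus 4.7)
My plan is to decompose the proof into three steps: (i) show that the JL projection approximately preserves the KDE mode value, (ii) show that the single mean-shift update in Line~\ref{line:mean_shift} recovers a point in $\R^d$ whose KDE value is at least $\G_{\Pi P}(x'')$, and (iii) tally the runtime. For step (i) I would apply the $(\gamma, n+1, \delta)$-JL guarantee to $P \cup \{x^*\}$ with $x^* = \argmax_{x \in \R^d} \G_P(x)$, then split $\G_P(x^*)$ into points inside and outside a ball of radius $R = \sqrt{\log(4/\eps\rho)}$ around $x^*$. Since $\G_P(x^*) \geq n\rho$, the outside points contribute at most $n e^{-R^2} \leq (\eps/4)\G_P(x^*)$. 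For the inside points, the JL distortion inflates each squared distance by at most $(2\gamma + \gamma^2)R^2 = O(\eps)$ under the choice $\gamma = \eps/(4 \log(4/\eps\rho))$, so $e^{-\|\Pi x^* - \Pi p_i\|^2} \geq (1-O(\eps)) e^{-\|x^* - p_i\|^2}$. Summing and absorbing the constants into $\eps$, this gives $\G_{\Pi P}(\Pi x^*) \geq (1 - \eps/2) \G_P(x^*)$.

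The heart of the argument is step (ii). The JL lower bound $\|\Pi p_i - \Pi p_j\| \geq \|p_i - p_j\|$ makes the finite map $\Pi p_i \mapsto p_i$ $1$-Lipschitz, so Kirszbraun's extension theorem produces a $1$-Lipschitz $F : \R^m \to \R^d$ with $\|F(x'') - p_i\| \leq \|x'' - \Pi p_i\|$ for all $i$. Let $q_i = e^{-\|x'' - \Pi p_i\|^2}/W$ with $W = \G_{\Pi P}(x'')$, so the mean-shift update is the $q$-weighted centroid $x' = \sum_i q_i p_i$, which minimizes $\sum_i q_i \|p_i - x\|^2$ in $x$. Combining optimality of the centroid with Kirszbraun gives
\[
\sum_i q_i \|p_i - x'\|^2 \leq \sum_i q_i \|F(x'') - p_i\|^2 \leq \sum_i q_i \|x'' - \Pi p_i\|^2.
\]
Next, applying Jensen's inequality to $\log \G_P(x') = \log \sum_i q_i (e^{-\|p_i - x'\|^2}/q_i)$ yields $\log \G_P(x') \geq -\sum_i q_i \|p_i - x'\|^2 - \sum_i q_i \log q_i$. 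The same Jensen expansion applied at $x''$ is an equality, because $e^{-\|\Pi p_i - x''\|^2}/q_i = W$ is constant in $i$, giving $\log W = -\sum_i q_i \|\Pi p_i - x''\|^2 - \sum_i q_i \log q_i$. Subtracting the two and using the distance inequality above gives $\log \G_P(x') \geq \log W$, i.e.\ $\G_P(x') \geq \G_{\Pi P}(x'')$.

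Chaining the three bounds yields $\G_P(x') \geq \G_{\Pi P}(x'') \geq (1-\eps/2) \G_{\Pi P}(\Pi x^*) \geq (1-\eps) \G_P(x^*)$. For the runtime, forming $\Pi p_i$ for all $i$ with a dense Gaussian or Rademacher $\Pi$ costs $O(ndm)$, the low-dimensional call costs $T_{m,(1-\eps/2)}$, and computing $x'$ costs $O(nd + nm)$; the JL guarantee fixes $m = O(\log((n+1)/\delta)/\gamma^2) = O(\log(n/\delta)\log^2(1/\eps\rho)/\eps^2)$. The main obstacle is step (ii): the mean-shift rule in Line~\ref{line:mean_shift} is unusual in that its weights are computed in the projected space while the centroid is taken in the original space, and it is not immediately clear this single step monotonically improves over $\G_{\Pi P}(x'')$. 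The fix that makes the argument work is pairing Kirszbraun's extension (providing an ``ideal'' target in $\R^d$ whose KDE is at least $\G_{\Pi P}(x'')$) with the centroid's optimality (so that mean-shift matches or exceeds that target in weighted squared distance) and Jensen's inequality (turning the distance comparison into a KDE comparison).
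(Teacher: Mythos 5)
Your proof is correct, and while your steps (i) and (iii) mirror the paper (JL applied to $P\cup\{x^*\}$, a near/far split at radius $\sqrt{\log(4/\eps\rho)}$, and the same runtime tally), your step (ii) takes a genuinely different route through the crux of the theorem, namely that the hybrid update satisfies $\G_P(x')\geq \G_{\Pi P}(x'')$. The paper first proves standard mean-shift monotonicity (Claim \ref{claim:monotonic_mean_shift}), observes this would suffice if the Kirszbraun inequality $\|\tilde g(x'')-p\|\leq\|x''-\Pi p\|$ (Corollary \ref{cor:dist_approx}) were an equality, and then repairs the slack by lifting each $p$ to $\bar p\in\R^{d+1}$ with auxiliary coordinate $\sqrt{\|x''-\Pi p\|^2-\|\tilde g(x'')-p\|^2}$, applying monotonicity in $\R^{d+1}$, and dropping the extra coordinate. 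You instead argue directly: with weights $q_i\propto e^{-\|x''-\Pi p_i\|^2}$, the update $x'$ is the $q$-weighted centroid and hence minimizes $\sum_i q_i\|p_i-x\|^2$, which combined with the Kirszbraun extension gives $\sum_i q_i\|p_i-x'\|^2\leq\sum_i q_i\|x''-\Pi p_i\|^2$; Jensen's inequality on $\log\G_P(x')$, which holds with equality at $x''$ in the projected space because the ratios $e^{-\|\Pi p_i-x''\|^2}/q_i$ are constant in $i$, then converts this into $\log\G_P(x')\geq\log\G_{\Pi P}(x'')$. This majorization/EM-style argument absorbs the non-tightness of the Kirszbraun inequality without any auxiliary dimension, so it is somewhat shorter and makes transparent why weights computed in the projected space are harmless; the paper's route has the advantage of isolating a reusable, standalone monotonicity fact about mean-shift plus a purely geometric lifting trick. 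Two minor bookkeeping notes: your constant in $\gamma$ needs a harmless rescaling since the JL guarantee of Definition \ref{def:jl} distorts squared distances by $(1+\gamma)^2$ rather than $(1+\gamma)$ (the paper is equally casual on this point), and you never need the right-hand inequality of Lemma \ref{lem:value_approx}, i.e.\ $\max_x\G_{\Pi P}(x)\leq\max_x\G_P(x)$; the paper proves it as part of that lemma, but its final chain, like yours, only uses the lower-bound direction.
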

The runtime claim is immediate, so we focus on proving the correctness of Algorithm \ref{thm:dim_reduction}.
We begin with the following key lemma, which claims that the mode of our dimensionality reduced problem has approximately the same density  as that of the original problem.
\begin{lemma}\label{lem:value_approx}
	\begin{align}
	\label{eq:value_approx}
	(1-\epsilon/2)\max_{x \in \R^d}\G_P(x) \leq \max_{x\in\mathbb{R}^m} \G_{\Pi P}(x) \leq \max_{x \in \R^d}\G_P(x)
	\end{align}
\end{lemma}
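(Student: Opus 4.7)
The plan is to establish the two inequalities separately, since they use different techniques.

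For the upper bound $\max_{y\in\R^m}\G_{\Pi P}(y) \leq \max_{x\in\R^d}\G_P(x)$, I would invoke Kirszbraun's extension theorem. By the $(\gamma,n+1,\delta)$-JL guarantee applied to $\{p_1,\ldots,p_n\}$, we have $\|p_i-p_j\| \leq \|\Pi p_i - \Pi p_j\|$ for every pair, so the partial map $\Pi p_i \mapsto p_i$ from $\R^m$ to $\R^d$ is $1$-Lipschitz on its domain. Kirszbraun's theorem extends it to a $1$-Lipschitz map $\tilde{f}:\R^m\to\R^d$. Then for any $y\in\R^m$, setting $x=\tilde{f}(y)$ gives $\|x-p_i\|=\|\tilde{f}(y)-\tilde{f}(\Pi p_i)\|\leq \|y-\Pi p_i\|$ for every $i$, and a term-by-term comparison of the exponentials yields $\G_P(x)\geq \G_{\Pi P}(y)$. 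Maximizing over $y$ gives the upper bound.

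For the lower bound $(1-\eps/2)\max_x \G_P(x)\leq \max_y\G_{\Pi P}(y)$, I would evaluate $\G_{\Pi P}$ at the explicit candidate $y=\Pi x^*$, where $x^* = \argmax_{x\in\R^d}\G_P(x)$. The reason Algorithm~\ref{alg:dimensionality_reduction} demands the $(\gamma,n+1,\delta)$-JL guarantee rather than $(\gamma,n,\delta)$ is precisely so that the distortion bound applies to the augmented set $\{p_1,\ldots,p_n,x^*\}$, yielding $\|\Pi x^* - \Pi p_i\|^2 \leq (1+\gamma)^2 \|x^*-p_i\|^2$ for every $i$. I would then split the sum defining $\G_{\Pi P}(\Pi x^*)$ by thresholding on $\|x^*-p_i\|^2$ at a radius $R^2=\Theta(\log(1/\eps\rho))$. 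For ``close'' indices ($\|x^*-p_i\|^2\leq R^2$), the pointwise bound $e^{-\|\Pi x^*-\Pi p_i\|^2} \geq e^{-\|x^*-p_i\|^2}\cdot e^{-((1+\gamma)^2-1)R^2}$ combined with $\gamma=\Theta(\eps/\log(1/\eps\rho))$ as in Step~1 of the algorithm makes the second factor $\geq 1-\eps/4$ via $e^{-t}\geq 1-t$. For ``far'' indices, each original term satisfies $e^{-\|x^*-p_i\|^2}\leq e^{-R^2}=O(\eps\rho)$, so they contribute at most $O(n\eps\rho) \leq O(\eps)\cdot \G_P(x^*)$ to $\G_P(x^*)$, using the hypothesis $\G_P(x^*)\geq n\rho$. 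Combining these two bounds with a bit of care gives $\G_{\Pi P}(\Pi x^*)\geq (1-\eps/2)\G_P(x^*)$.

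The main obstacle is the bookkeeping in the close/far decomposition: two distinct sources of error (the multiplicative JL distortion on close terms, and the tail truncation on far terms) need to be balanced so their combined effect is at most $\eps/2$, not just $O(\eps)$. This is a calculation rather than a conceptual issue, and it fixes the threshold $R^2$ and the constant in $\gamma$. The conceptually novel ingredient is the Kirszbraun invocation in the upper bound, which is what allows us to lift a putative mode in the low-dimensional projected problem back to a point in $\R^d$ with at least as large a density---a fact that will also drive the correctness of the mean-shift recovery step on line~\ref{line:mean_shift} of Algorithm~\ref{alg:dimensionality_reduction}.
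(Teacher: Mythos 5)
Your proposal is correct and follows essentially the same route as the paper: the lower bound by evaluating $\G_{\Pi P}$ at $\Pi x^*$ (using the $(n{+}1)$-point JL guarantee) with a close/far split at radius $\Theta(\sqrt{\log(1/\eps\rho)})$ and the bound $e^{-t}\geq 1-t$, and the upper bound by extending the $1$-Lipschitz map $\Pi p \mapsto p$ via Kirszbraun and comparing term by term. The only cosmetic difference is your $(1+\gamma)^2$ versus the paper's $(1+\gamma)$ distortion on squared distances, which is absorbed into the constant chosen for $\gamma$.
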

\begin{proof}
	Let $x^* = \argmax_{x}\G_P(x)$. Since $\Pi$ was chosen to satisfy the $(\gamma,n+1,\delta)$ property with $\gamma = \frac{\epsilon}{4\log(4/\epsilon \rho)}$, we have that, with probability at least $1-\delta$, for all $y, z \in \{x^*\}\cup P$, 
	\begin{align}
	\label{eq:jl_guar}
	\norm{y - z}^2 \leq \norm{\Pi y - \Pi z}^2 \leq \left(1+\frac{\eps}{4\log(4/\eps\rho)}\right)\norm{y -  z}^2.
	\end{align}
	The rest of our analysis conditions on this fact being true. We first prove the left side of \eqref{eq:value_approx}. 
From \eqref{eq:jl_guar}, we have that $\norm{\Pi x^* - \Pi p}^2 \leq (1+\frac{\eps}{4\log({4}/{\eps\rho})}) \norm{x^* - p}^2$ for all $p\in P$. 
Accordingly,	
\begin{align}\label{eq:sum_breakdown}
\max_{x\in\mathbb{R}^m} \G_{\Pi P}(x)  \geq \G_{\Pi P}(\Pi x^*) 
= 
\sum_{p\in P} e^{-\norm{\Pi x^* - \Pi p}^2} 
& \geq 
\sum_{p\in P} e^{-(1+\frac{\eps}{4\log(4/\eps\rho)})\norm{ x^* - p}^2} \nonumber \\
&\geq \hspace{-1.5em}
\sum_{\substack{p\in P\\ \norm{ x^*- p}^2 < \log(4/\eps\rho)}} \hspace{-1.5em} e^{-\norm{ x^* - p}^2}e^{-\frac{\eps}{4\log(4/\eps\rho)}\norm{ x^*- p}^2} \nonumber \\
& \geq 
(1-\eps/4) \hspace{-1.5em}\sum_{\substack{p\in P\\ \norm{ x^*- p}^2 < \log ({4}/{\eps\rho})}} \hspace{-1.5em} e^{-\norm{ x^* - p}^2}.
\end{align}
The last step uses that $e^{-\frac{\eps}{4\log (4/{\eps\rho})}\norm{ x - p}^2} \geq 1-\eps/4$ when $\|x-p\|^2 \leq \log({4/\eps\rho})$. 
Next we have
\begin{align*}
\sum_{\substack{p\in P\\ \norm{ x^*- p}^2 < \log \frac{4}{\eps\rho}}}  \hspace{-2em} e^{-\norm{ x^* - p}^2} 
\geq 
\sum_{p\in P} e^{-\norm{ x^* - p}^2}  - \eps n\rho
= 
\G_P(x^*) - \eps n\rho
\geq 
(1-\eps/4)\G_P(x^*).
\end{align*}
This statement follows from two facts: 1) If $\|x-p\|^2 \geq \log\frac{4}{\eps\rho}$ then $e^{-\norm{ x - p}^2} \leq  \eps\rho/4$ and 2) we assume that $\G_P(x^*) \geq \rho n$. 
Combining with \eqref{eq:sum_breakdown} we conclude that $
\G_{\Pi P}(x) \geq (1-\eps/2)\G_P(x^*)
$.

We are left to prove the right hand side of \eqref{eq:value_approx}. To do so, we rely on the classic Kirszbraun extension theorem for Lipschitz functions, which is stated as follows:
\begin{theorem}[Kirszbraun Theorem \cite{Kirszbraun:1934,Valentine:1945}]\label{thm:kirszbraun}
	For any $\mathcal{S} \subset \R^z$, let $f: S \rightarrow \R^w$ be an $L$-Lipschitz function: for all $x,y \in \mathcal{S}$, $\|f(x)-f(y)\|_2 \leq L \|x-y\|_2$. Then there always exists some extension $\tilde{f}: \R^z \rightarrow \R^w$ of $f$ to the entirety of $\R^z$ such that:
	\begin{enumerate}
		\item $\tilde{f}(x) = f(x)$ for all $x \in S$,
		\item $\tilde{f}$ is also $L$-Lipschitz: for all $x,y \in \R^z$, $\|\tilde{f}(x)-\tilde{f}(y)\|_2 \leq L \|x-y\|_2$.
	\end{enumerate}
\end{theorem}

	We will apply this theorem to the function $g: \{\Pi x^*\}\cup \Pi P \rightarrow \{x^*\}\cup P$ with $g(\Pi y) = y$ for any $y\in \{x^*\}\cup P$. By \eqref{eq:jl_guar}, we have that $g$ is $1$-Lipschitz. It follows that there is some function $\tilde{g}: \R^m \rightarrow \R^d$ which agrees with $g$ on inputs $\{\Pi x^*\} \cup P$ and satisfies
	$\norm{\tilde{g}(s) - \tilde{g}(t)} \leq \norm{s - t}$
	for all $s,t \in \R^m$.
	This fact can be used to establish  that, for any $x \in \R^m$, $\G_{\Pi P}(x) \leq \G_P(\tilde{g}(x))$:
	\begin{align*}
		\G_{\Pi P}(x) 
		= 
		\sum_{p\in P} e^{-\norm{x - \Pi p}^2}  
		\leq 
		\sum_{p\in P} e^{-\norm{\tilde{g}(x) - \tilde{g}(\Pi p)}^2}  
		= 
		\sum_{p\in P} e^{-\norm{\tilde{g}(x) - p}^2} = \G_P(\tilde{g}(x)).   
	\end{align*}
	It thus follows that $\max_{x} \G_{\Pi P}(x)\leq  \max_x\G_P(x)$, so the right side of \eqref{eq:value_approx} is proven. 
\end{proof}

Note that in proving Lemma \ref{lem:value_approx} we have also proven the following statement.
\begin{corollary}\label{cor:dist_approx}
	For any $x \in \R^m$, there exists some point $\tilde{g}(x) \in \R^d$ such that, for all $p\in P$,
	$\|\tilde{g}(x) - p\| \leq \|x - \Pi p\|$.
\end{corollary}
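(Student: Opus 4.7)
\textbf{Proof proposal for Corollary \ref{cor:dist_approx}.} The plan is to observe that this statement is essentially an immediate byproduct of the argument already carried out in the proof of Lemma \ref{lem:value_approx}, specifically the Kirszbraun-extension step. I would make this explicit by re-extracting the construction of $\tilde g$ and noting that its defining properties give the claimed inequality directly, with no new machinery needed.

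First, I would invoke the same JL event from the proof of Lemma \ref{lem:value_approx}, namely that $\|p - p'\| \leq \|\Pi p - \Pi p'\|$ holds for all $p, p' \in P$. Using this, define $g : \Pi P \to \R^d$ by $g(\Pi p) = p$ for each $p \in P$; the JL lower bound immediately gives $\|g(\Pi p) - g(\Pi p')\| = \|p - p'\| \leq \|\Pi p - \Pi p'\|$, so $g$ is $1$-Lipschitz on $\Pi P$. Apply Kirszbraun's theorem (Theorem \ref{thm:kirszbraun}) to obtain a $1$-Lipschitz extension $\tilde g : \R^m \to \R^d$ with $\tilde g(\Pi p) = p$ for every $p \in P$.

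Then, for any $x \in \R^m$ and any $p \in P$, I would conclude the claim by writing
\[
\|\tilde g(x) - p\| = \|\tilde g(x) - \tilde g(\Pi p)\| \leq \|x - \Pi p\|,
\]
using that $\tilde g$ agrees with $g$ on $\Pi P$ in the first equality and the $1$-Lipschitz property of $\tilde g$ in the inequality. There is essentially no obstacle here: the only subtlety worth flagging is that the Kirszbraun extension depends on $x$ only through the ambient space (the same $\tilde g$ works for every $x \in \R^m$ simultaneously), so the statement of the corollary is correctly read as ``there exists a single mapping $x \mapsto \tilde g(x)$ such that the inequality holds for all $x$ and all $p$,'' which is exactly what Kirszbraun provides.
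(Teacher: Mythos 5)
Your proof is correct and follows essentially the same route as the paper: invoke the JL lower bound to conclude $g(\Pi p) = p$ is $1$-Lipschitz, extend via Kirszbraun, and then apply the Lipschitz bound of $\tilde g$ at an arbitrary $x \in \R^m$. The only cosmetic difference is that the paper defines $g$ on $\{\Pi x^*\}\cup\Pi P$ (since it needs $x^*$ for Lemma~\ref{lem:value_approx}), whereas you restrict to $\Pi P$, which suffices for the corollary as stated.
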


Next, we complete the proof of Theorem \ref{thm:dim_reduction} by showing that, not only does the maximum of $\G_{\Pi P}$ approximate that of $\G_{P}$, but an approximate maximizer for $\G_{\Pi P}$ can be used to recover one for $\G_{P}$.
Algorithm \ref{alg:dimensionality_reduction} does so on Line \ref{line:mean_shift} by applying what can be viewed as single iteration of the \emph{mean-shift} algorithm, a common heuristic KDE mode finding \cite{Carreira-Perpinan:2000, Carreira-Perpinan:2007}:

\begin{algorithm}[H]
	\caption{Mean-shift Algorithm}
	{\bf input}: a set of $n$ points $P\subset \R^d$, number of iterations $t$.
	
	\begin{algorithmic}[1]
		\STATE Select some initial point $x^{(0)} \in \R^d$
		\FOR{$i = 0, \ldots, t-1$}
		\STATE $x^{(i+1)} = \frac{\sum_{p\in P} p \cdot e^{-\norm{x^{(i)}-p}^2}}{\sum_{p\in P} e^{-\norm{x^{(i)}-p}^2}}$ \label{line:update}
		\ENDFOR
		\STATE \textbf{return} $x^{(t)}$
	\end{algorithmic}
	\label{alg:mean_shift}
\end{algorithm}
While not guaranteed to converge to a point which maximizes $\G_P$, a useful property of the mean-shift algorithm is that its solution is guaranteed to improve on each iteration:
\begin{claim}
	\label{claim:monotonic_mean_shift}
	Given $y \in \R^d$, let $y' = \frac{\sum_{p\in P} p \cdot e^{-\norm{y-p}^2}}{\sum_{p\in P} e^{-\norm{y-p}^2}}$, then $\G_P(y') \geq \G_P(y)$. 
\end{claim}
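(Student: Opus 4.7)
The plan is to prove the claim via a standard majorization/Jensen argument, using the fact that $y'$ is precisely the weighted centroid of the points in $P$ with weights proportional to $e^{-\|y-p\|^2}$.

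First, I would write
\[
\frac{\G_P(y')}{\G_P(y)} = \sum_{p \in P} \tilde{w}_p \cdot e^{\|y-p\|^2 - \|y'-p\|^2}, \qquad \tilde{w}_p = \frac{e^{-\|y-p\|^2}}{\G_P(y)},
\]
by multiplying and dividing inside the sum defining $\G_P(y')$ by $e^{-\|y-p\|^2}$. The weights $\tilde{w}_p$ are non-negative and sum to $1$, so Jensen's inequality applied to the convex function $\exp$ gives
\[
\frac{\G_P(y')}{\G_P(y)} \;\geq\; \exp\Bigl(\sum_{p \in P} \tilde{w}_p \bigl(\|y-p\|^2 - \|y'-p\|^2\bigr)\Bigr).
\]

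The second step is to show the exponent is non-negative. Expanding squared norms, $\|y-p\|^2 - \|y'-p\|^2 = \|y\|^2 - \|y'\|^2 - 2\inner{y - y'}{p}$, so summing against $\tilde{w}_p$ and using $\sum_p \tilde{w}_p = 1$ gives
\[
\sum_{p \in P} \tilde{w}_p \bigl(\|y-p\|^2 - \|y'-p\|^2\bigr) = \|y\|^2 - \|y'\|^2 - 2\inner{y - y'}{\sum_{p} \tilde{w}_p\, p}.
\]
Now the crucial identity: by definition of $y'$ in the mean-shift update, $\sum_p \tilde{w}_p\, p = y'$. Substituting this and simplifying yields
\[
\|y\|^2 - \|y'\|^2 - 2\inner{y - y'}{y'} = \|y\|^2 + \|y'\|^2 - 2\inner{y}{y'} = \|y - y'\|^2 \geq 0.
\]

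Combining the two steps gives $\G_P(y')/\G_P(y) \geq e^{\|y - y'\|^2} \geq 1$, which is exactly what we want. I do not anticipate any real obstacle here; the only substantive observation is that $y'$ being the $\tilde{w}$-weighted mean of $P$ is precisely what makes the inner product term collapse into $\|y-y'\|^2$, and this identity is what powers both Jensen's inequality and the final non-negativity. The argument is in fact a textbook instance of an EM/MM monotonicity proof, with the auxiliary function being the log-sum-exp lower bound obtained from Jensen.
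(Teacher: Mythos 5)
Your proof is correct. It takes a mildly different route from the paper's: the paper writes $\G_P(y') - \G_P(y) = \sum_{p\in P}\bigl(e^{\norm{y-p}^2-\norm{y'-p}^2}-1\bigr)e^{-\norm{y-p}^2}$ and applies the tangent-line bound $e^z \geq 1+z$ termwise, after which the linear terms collapse---via exactly the centroid identity you use, namely that $y'$ is the $\tilde w$-weighted mean of $P$---to $\G_P(y)\norm{y'-y}^2 \geq 0$. You instead normalize the weights into a probability distribution and apply Jensen's inequality to the convex function $\exp$, pushing the average into the exponent before invoking the same identity and the same expansion of the squared norms. The essential content (convexity of $\exp$ plus the mean-shift centroid identity) is shared, but your multiplicative formulation gives the slightly stronger conclusion $\G_P(y') \geq e^{\norm{y-y'}^2}\G_P(y)$, whereas the paper's additive argument gives $\G_P(y') \geq \bigl(1+\norm{y-y'}^2\bigr)\G_P(y)$; yours implies theirs via $e^z\geq 1+z$, and both imply the claim. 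Your framing also makes the standard EM/MM auxiliary-function interpretation explicit, which the paper leaves implicit.
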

\begin{proof} We prove this well known fact for completeness. First, observe by rearrangement that $\G_P(y') - \G_P(y) = \sum_{p\in P} \left(e^{-\norm{y' - p}^2+\norm{y-p}^2} - 1\right)e^{-\norm{y-p}^2}$. Then, since $e^z \geq 1+z$ for all $z$,
	\begin{align*}
		\G_P(y') - \G_P(y) &\geq
		\sum_{p\in P} \left(1-\norm{y' - p}^2+\norm{y-p}^2 - 1\right)e^{-\norm{y-p}^2} \\
		&= 
		\sum_{p\in P} -\left(\norm{y'}^2 - \norm{y}^2 - 2(y' - y)^Tp\right)e^{-\norm{y - p}^2}\\
		&= 
		-\left(\norm{y'}^2 - \norm{y}^2\right)\sum_{p\in P} e^{-\norm{y - p}^2} + 2(y' - y)^T\sum_{p\in P}pe^{-\norm{y - p}^2}\\
		&= 
		\G_P(y)\left(-\norm{y'}^2 + \norm{y}^2 + 2(y' - y)^Ty'\right)\\
		&= 
		\G_P(y)\norm{y' - y}^2
		\geq 
		0.  \qedhere
	\end{align*}
\end{proof}
With Claim \ref{claim:monotonic_mean_shift} in place, we can prove the main result of this section:

\begin{proof}[Proof of Theorem \ref{thm:dim_reduction}]
	Recall from Corollary \ref{cor:dist_approx} that for any $x$, there is always a $\tilde{g}(x)$ with 
	\begin{align}
	\label{eq:inequal_again}
	\|\tilde{g}(x) - p\| \leq \|x - \Pi p\|
	\end{align}
	for all $p\in P$. 
	Suppose this inequality was tight: i.e., suppose that for all $p\in P, x\in \R^m$, $\|\tilde{g}(x) - p\| = \|x - \Pi p\|$. Then letting $x''$ be as defined in Algorithm \ref{alg:dimensionality_reduction}, we would have that Line \ref{line:mean_shift} sets $x'$ equal to a mean-shift update applied to $\tilde{g}(x)$.
	From Claim \ref{claim:monotonic_mean_shift} we would then immediately have that $\G_P(x') \geq \G_P(\tilde{g}(x'')) = \G_{\Pi P}(x'') \geq (1-\eps/2)\max_x\G_{\Pi P}(x) \geq (1-\eps)\max_x\G_{P}(x)$, which would prove the theorem.
	
	However, since \eqref{eq:inequal_again} is not tight, we need a more involved argument.
	In particular, for each $p\in P$, let $\bar{p} \in \R^{d+1}$ be a vector with its first $d$ entries equal to $p$ and let the final entry be equal to 
	\begin{align*}
	\sqrt{\|x - \Pi p\|^2 - \|\tilde{g}(x) - p\|^2}.  
	\end{align*}
	Additionally, let $\bar{\tilde{g}}(x) \in \R^{d+1}$ be a vector with its first $d$ entries equal to $\tilde{g}(x)$ and final entry equal to $0$. Clearly, for any $p\in P$,
	\begin{align}
	\label{eq:bar_again}
	\|\bar{\tilde{g}}(x) - \bar{p}\| =  \|x - \Pi p\|.
	\end{align}
	For $z\in \R^{d+1}$, let $\bar{\G}_P(z) =  \sum_{p\in P} e^{-\norm{z - \bar{p}}^2}$ and let $\bar{x}' = \frac{\sum_{p\in P} \bar{p} e^{-\norm{x'' - \Pi p}^2} }{\sum_{p\in P} e^{-\norm{x'' - \Pi p}^2}}$.
	It follows from \eqref{eq:bar_again} and  the argument above that $\bar{\G}_P(\bar{x}') \geq \bar{\G}_P(\bar{\tilde{g}}(x')) = \G_{\Pi P}(x'')$. But clearly it also holds that $\G_P(x')\geq \bar{\G}_P(\bar{x}')$ because, for any $p\in P$, $\|x' - p\| \leq \|\bar{x}' - \bar{p}\|$. So we conclude that $\G_P(x')\geq \G_{\Pi P}(x'')$ as desired.
	Furthermore, recall that $x''$ is an approximate mode in the projected setting. It satisfies $\G_{\Pi P}(x'') \geq \max_x(1-\eps/2)\G_{\Pi P}(x)$, and from Lemma~\ref{lem:value_approx} we have that $\max_x\G_{\Pi P}(x) \ge (1-\eps/2)\max_x\G_P(x)$.
	Chaining these inequalities gives the desired bound that $\G_P(x') \ge (1-\eps/2)^2\max_x\G_P(x) \ge (1-\eps)\max_x\G_P(x)$.
\end{proof}

\section{Final Algorithms for $d \ge 3$}
\label{sec:finalschemes}

In this section, we combine the previous polynomial solving and dimensionality results with standard coreset results to give our final algorithms.
Our first algorithm has runtime dependent on the dimensionality, and is suitable for low dimensions.
Our second algorithm on the other hand removes all dependence on the dimensionality beyond the linear dependence for reading the input and basic pre-processing, which is useful for instances where the ambient dimension is high.

The next section covers the special case where $d=2$, where we instead give a specialized combinatorial algorithm.

\subsection{Coreset Results We Leverage}
\label{sec:coresets}

There are now a variety of results for coresets for kernel density estimates $\Gb_P$; see \cite{phillips2018near} for an overview.  These are a subset $Q \subset P$ so $\Gb_Q$ approximates $\Gb_P$.  They are typically stated in terms of additive error, so $\max_{x \in \R^d} \abs{\Gb_P(x) - \Gb_Q(x)} \leq \alpha$ for a point set $P$ of size $n$.    In low dimensions $d$, Phillips and Tai~\cite{phillips2018near} provided a coreset of size $O((\sqrt{d}/\alpha) \sqrt{\log 1/\alpha})$ that runs in time $O(n \cdot \mathrm{poly}(1/\alpha))$.  Recently Karnin and Liberty~\cite{KL19} showed coresets of size $O(\sqrt{d}/\alpha)$ exist (with no algorithm), which is tight for constant $d$~\cite{phillips2018near}.  
In high dimensions, Lopaz-Paz \etal~\cite{LopazPaz15} showed a random sample of size $O(1/\alpha^2 \log \frac{1}{\delta})$ is an additive error coreset with probability $1-\delta$, and this again tight~\cite{phillips2018near}.  
By setting $\alpha = \eps \rho$, each of these results can provide a relative $(1-\eps)$-approximation guarantee in $\Gb_{P_0}(x)$ for all $x \in \R^d$ such that $\Gb_P(x) > \rho$.
Alternatively Zheng and Phillips~\cite{zheng2017coresets} directly showed that a random sample $P_1$ of size $O(\frac{d}{\eps^2}\frac{1}{\rho} (\log \frac{1}{\rho} + \log \frac{1}{\delta}))$ could guarantee a $(\eps,\rho)$-approximation where for any $x \in \R^d$ that
$\abs{\Gb_P(x) - \Gb_{P_1}(x)} \leq \eps M_x$ where $M_x = \max\{\Gb_P(x),\rho\}$.  

In the low-dimensional case, we will start with the coreset of Zheng and Phillips~\cite{zheng2017coresets} to create a coreset $P_1$ of size $n_1 = O(\frac{d}{\eps^2}\frac{1}{\rho} (\log \frac{1}{\rho} + \log \frac{1}{\delta}))$, which makes the dependence on $1/\rho$ small.  
The random sample can be computed in the time $O(nd)$ it takes to read the data.  

In the high-dimensional case, we design a constant-probability algorithm that we repeat $O(\log\frac{1}{\delta})$ times.
We will first use the sampling result of Lopaz-Paz \etal~\cite{LopazPaz15} to create coresets $P^j_0$ (for $j \in [1..O(\log\frac{1}{\delta})]$) with size $n_0 = O(\frac{1}{\eps^2 \rho^2})$ independent of $d$.
All $O(\log\frac{1}{\delta})$ samples can be constructed with a single pass of the data, with runtime $O(nd + n_0 \log\frac{1}{\delta})$.
Then, after dimensionality reduction to dimension $m$, we apply the coreset result of Phillips and Tai~\cite{phillips2018near} to obtain another coreset $P^j_2$ of size $n_2 = O((\sqrt{m}/\eps\rho) \sqrt{\log (1/\eps \rho)})$ which reduces the dependence on $1/\rho$.
The runtime of this coreset construction is $\poly(n_0,m,1/\eps\rho)$, and as we shall see, will be dominated by the quasipolynomial time needed for polynomial system solving.

\subsection{Algorithm for Low Dimension}
We pre-process the input $P$ by constructing, under the assumption that $\max_x \Gb_P(x) \ge \rho$, a $(1-\eps/3)$-approximation coreset from \cite{zheng2017coresets} of size $n_1 = O(\frac{d}{\eps^2}\frac{1}{\rho} (\log \frac{1}{\rho} + \log \frac{1}{\delta}))$.
Then we run Algorithm \ref{alg:sys_poly} on $P_1$ to get solution $x'$.

\begin{theorem}[restated Theorem \ref{thm:low}]\label{thm:main_low}
	Given $\eps,\rho>0$ and a point set $P\subset\mathbb{R}^d$ of size $n$, we can find $x'\in\mathbb{R}^d$ so
	\[
	\Gb_P(x') \geq (1-\eps)\Gb_P(x^*)
	\]
	if $\Gb_P(x^*)\geq \rho$ where $x^*=\argmax_{x\in\mathbb{R}^d} \Gb_P(x)$ in $O\left(nd+\frac{d}{\eps^2\rho} \cdot \log\frac{1}{\rho\delta} \cdot \left(\log \frac{d}{\eps\rho}\right)^{O(d)}\right)$ time with probability $1-\delta$.
\end{theorem}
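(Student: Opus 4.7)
The plan is to preprocess $P$ with the Zheng--Phillips coreset sampler and then invoke Theorem~\ref{thm:ig-alg} on the small coreset. Set $\eps_1 = \eps_2 = \eps/4$. First, I would draw an $(\eps_1,\rho)$-approximation coreset $P_1$ of size $n_1 = O\!\left(\tfrac{d}{\eps^2\rho}\log\tfrac{1}{\rho\delta}\right)$ using \cite{zheng2017coresets}, which requires a single $O(nd)$ pass over the input. Conditioning on the success event of probability $1-\delta$, we have $|\Gb_P(x) - \Gb_{P_1}(x)| \le \eps_1 \max(\Gb_P(x),\rho)$ for every $x \in \R^d$. In particular, $\Gb_{P_1}(x^*) \ge (1-\eps_1)\Gb_P(x^*) \ge (1-\eps_1)\rho$, so the maximum of $\Gb_{P_1}$ exceeds the threshold $(1-\eps_1)\rho$.

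Next, I would run Algorithm~\ref{alg:sys_poly} on $P_1$ with error parameter $\eps_2$ and lower bound $\rho_2 = (1-\eps_1)\rho$. By Theorem~\ref{thm:ig-alg}, this produces an $x'$ satisfying $\Gb_{P_1}(x') \ge \Gb_{P_1}(y^*) - \eps_2 \rho_2$, where $y^* = \argmax_x \Gb_{P_1}(x)$. Chaining with $\Gb_{P_1}(y^*) \ge \Gb_{P_1}(x^*) \ge (1-\eps_1)\Gb_P(x^*)$ and using $\Gb_P(x^*) \ge \rho$ to absorb $\eps_2 \rho_2$ into a multiplicative loss gives $\Gb_{P_1}(x') \ge (1-\eps_1-\eps_2)\Gb_P(x^*)$.

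To transfer the guarantee back to $P$, I would apply the coreset inequality once more at $x'$: $\Gb_P(x') \ge \Gb_{P_1}(x') - \eps_1\max(\Gb_P(x'),\rho)$. Splitting on whether $\Gb_P(x') \ge \rho$ or $\Gb_P(x') < \rho$, and in both cases using $\Gb_P(x^*) \ge \rho$ to convert the additive $\eps_1\rho$ slack into a multiplicative loss, yields $\Gb_P(x') \ge (1 - 2\eps_1 - \eps_2)\Gb_P(x^*) \ge (1-\eps)\Gb_P(x^*)$ for our choice $\eps_1 = \eps_2 = \eps/4$. The overall runtime is $O(nd)$ for sampling plus the cost of Theorem~\ref{thm:ig-alg} on input size $n_1$, namely $O\!\left(n_1 \log n_1 \cdot (2\sqrt{2e\pi})^d + n_1 (\log(d/\eps\rho))^{O(d)}\right)$; since $\log n_1 = O(\log(d/\eps\rho\delta))$ and $(2\sqrt{2e\pi})^d$ is dominated by $(\log(d/\eps\rho))^{O(d)}$ in the asymptotic regime of interest, the total simplifies to the claimed $O\!\left(nd + \tfrac{d}{\eps^2\rho}\log\tfrac{1}{\rho\delta}(\log(d/\eps\rho))^{O(d)}\right)$.

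The main bookkeeping obstacle is the conversion between the additive guarantees (the $\eps_2 \rho_2$ slack from Theorem~\ref{thm:ig-alg} and the $\eps_1\rho$ slack in the coreset at points where $\Gb_P(x) < \rho$) and the multiplicative $(1-\eps)$ guarantee in the conclusion; this is handled by repeatedly invoking the hypothesis $\Gb_P(x^*) \ge \rho$ to treat $\rho$ as a lower bound on the objective. No new technical ideas are required beyond careful chaining of the coreset and polynomial-system guarantees.
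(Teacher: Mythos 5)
Your proposal is correct and follows essentially the same route as the paper: subsample $P$ with the Zheng--Phillips relative $(\eps,\rho)$-coreset to obtain $P_1$ of size $O(\frac{d}{\eps^2\rho}\log\frac{1}{\rho\delta})$, run Algorithm~\ref{alg:sys_poly} (Theorem~\ref{thm:ig-alg}) on $P_1$, and chain the coreset inequality, the additive guarantee of Theorem~\ref{thm:ig-alg}, and the hypothesis $\Gb_P(x^*)\geq\rho$ to turn the additive $\eps\rho$ slacks into a multiplicative $(1-\eps)$ bound. The only cosmetic difference is your choice of $\eps_1=\eps_2=\eps/4$ and the explicit case split on $\Gb_P(x')\gtrless\rho$, versus the paper's $\eps/3$ and a single chain using $M_{x'}\leq\Gb_P(x^*)$; these are equivalent bookkeeping.
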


\begin{proof}
	Let $x^{**} = \argmax_{x\in\R^d} \Gb_{P_1}(x)$.
	We first have $\Gb_{P_1}(x^{**}) \geq \Gb_{P_1}(x^*) \geq (1-\frac{1}{3}\eps)\Gb_P(x^*) = \Omega(\rho)$ for small $\eps$.
	By Lemma \ref{thm:ig-alg} and reparameterizing $\eps$, we have 
	\begin{align*}
	\Gb_P(x')
	& \geq 
	\Gb_{P_1}(x') -\frac{1}{3}\eps M_{x'} & \text{by the construction of $P_1$}\\
	& \geq
	\Gb_{P_1}(x^{**})-\frac{1}{3}\eps\rho -\frac{1}{3}\eps M_{x'} & \text{since $\Gb_{P_1}(x^{**}) = \Omega(\rho)$ and by Lemma \ref{thm:ig-alg}}\\
	& \geq
	\Gb_{P_0}(x^{*}) - \frac{1}{3}\eps\rho -\frac{1}{3}\eps M_{x'} \\
	& \geq
	(1-\frac{1}{3}\eps) \Gb_P(x^*) -\frac{1}{3}\eps\rho - \frac{1}{3}\eps M_{x'} & \text{by $\Gb_P(x^*) \geq \rho$ and construction of $P_1$}\\
	& \geq
	(1-\eps)\Gb_P(x^*) & \text{since $M_{x'}\leq \Gb_P(x^*)$} 
	\end{align*}
	The final running time is $O(nd)$ to read data and construct $P_1$ plus
	\[
	O\left( n_1\cdot\log n_1\cdot(2\sqrt{2e\pi})^d + n_1\cdot \left(\log \frac{d}{\eps\rho}\right)^{O(d)} \right) = O\left(\frac{d}{\eps^2\rho} \cdot \log\frac{1}{\rho\delta} \cdot \left(\log \frac{d}{\eps\rho}\right)^{O(d)}\right). \qedhere
	\]	
\end{proof}

\subsection{Algorithm for High Dimension}
\label{sec:d-large}

For high dimensional case, we combine together the techniques of 1) dimensionality reduction, 2) polynomial system solving and 3) coresets that we have developed to obtain an algorithm that is linear in the dimensionality $d$ and exponential only in $\poly(1/\eps, \log 1/\rho)$.
We use Algorithm~\ref{alg:dimensionality_reduction} as a subroutine, where we instantiate Step 4 (solving mode finding in low dimensions) with an application of a coreset result combined with the polynomial system solving approach from Section~\ref{sec:d-small}.
We write out the full algorithm as follows for easier reading, where Steps 3-8 correspond to Algorithm~\ref{alg:dimensionality_reduction}.


\begin{algorithm}[H]
	\caption{Full algorithm for high dimensional case}
	{\bf input}: a point set $P\in \R^d$, parameter $\eps,\rho,\delta>0$
	\begin{algorithmic}[1]
		\STATE Generate $O(\log\frac{1}{\delta})$ random samples $P^j_0\subset P$ of size $n_0 = O(\frac{1}{\eps^2\rho^2})$ (\`{a} la Lopaz-Paz et al.)
		\FOR {$j \gets 1$ to $O(\log\frac{1}{\delta})$}
		
		\STATE Set $\gamma = \frac{\epsilon}{4\log(4/\epsilon \rho)}$.
		\STATE Choose random matrix $\Pi \in \R^{m\times d}$ satisfying $(\gamma,n+1,1/100)$-JL guarantee (Defn. \ref{def:jl})
		\STATE For each $p_i \in P^j_0$, compute $\Pi p_i$ and let $\Pi P^j_0$ denote the data set $\{\Pi p_1, \ldots, \Pi p_n\}$
		
		
		\STATE Run the algorithm in Phillips and Tai~\cite{phillips2018near} to construct a subset $P^j_2\subset \Pi P^j_0$ of size $n_2 = O(\frac{\sqrt{m}}{\eps\rho}\sqrt{\log \frac{1}{\eps\rho}}) = O(\frac{1}{\eps^2\rho}\log^2\frac{1}{\eps\rho})$ 

          \STATE Set $x''$ as the output of Algorithm \ref{alg:sys_poly} (Section~\ref{sec:d-small}) on $P^j_2$ in dimension $m$  \label{line:pbn-alg}

		\STATE Compute new $x' = \frac{\sum_{p\in P^j_0} p \cdot e^{-\norm{x'' - \Pi p}^2} }{\sum_{p\in P^j_0} e^{-\norm{x'' - \Pi p}^2}}$ \label{line:mean_shift-full}
		\ENDFOR
		\STATE \textbf{Return} the best solution from all iterations of Step 8, evaluated on $\bigcup_j P^j_0$
	\end{algorithmic}
	\label{alg:full}
\end{algorithm}

\begin{theorem}[restated Theorem \ref{thm:high}]
	Given $\eps,\rho>0$ and a point set $P\subset\mathbb{R}^d$ of size $n$, we can find $x'\in\mathbb{R}^d$ so 
	\[
	\Gb_P(x') \geq (1-\eps)\Gb_P(x^*)
	\]
	if $\Gb_P(x^*)\geq \rho$, with probability at least $1-\delta$, where $x^*=\argmax_{x\in\mathbb{R}^d} \Gb_P(x)$ in time
	$O\left(nd +  
                \left(\log \frac{1}{\eps \rho } \right)^{O(\frac{1}{\eps^2}\log^3\frac{1}{\eps\rho})} \cdot \log\frac{1}{\delta} + 
                \min\{nd \log \frac{1}{\delta}, \frac{d}{\eps^2\rho^2}\log^2\frac{1}{\delta}\}\right)$.
\end{theorem}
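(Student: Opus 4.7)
The algorithm is essentially Algorithm~\ref{alg:dimensionality_reduction} wrapped between an outer Lopaz-Paz subsample and an inner Phillips-Tai coreset, with Algorithm~\ref{alg:sys_poly} as the low-dimensional solver, all then boosted from constant success probability to $1-\delta$ via $O(\log\tfrac{1}{\delta})$ independent repetitions. I would prove correctness for a single iteration conditional on the two randomized pieces succeeding, then argue amplification and that the final comparison step correctly identifies a good candidate, and finally collect the runtime.

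\textbf{Single-iteration correctness.} Fix iteration $j$. With $n_0 = \Theta(1/(\eps\rho)^2)$, the Lopaz-Paz et al.\ guarantee implies that with constant probability $P^j_0$ is an $(\eps\rho/C)$-additive coreset of $P$, i.e., $|\Gb_P(x)-\Gb_{P^j_0}(x)|\le \eps\rho/C$ uniformly in $x$. Condition on this. Within the iteration, the $(\gamma, n_0{+}1, 1/100)$-JL matrix $\Pi$ with $\gamma = \eps/(4\log(4/\eps\rho))$ satisfies the hypothesis of Theorem~\ref{thm:dim_reduction}, applied to $P^j_0$ in place of $P$: Lemma~\ref{lem:value_approx} gives $\max_y\Gb_{\Pi P^j_0}(y)\ge(1-\eps/c)\max_x\Gb_{P^j_0}(x)$, and Corollary~\ref{cor:dist_approx} supplies the Kirszbraun extension underlying the mean-shift recovery on Line~\ref{line:mean_shift-full}. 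By Phillips-Tai, $P^j_2$ is a deterministic $(\eps\rho/C)$-additive coreset of $\Pi P^j_0$, and Algorithm~\ref{alg:sys_poly} on $P^j_2$ in dimension $m$ returns $x''$ with $\Gb_{P^j_2}(x'')\ge \max_y\Gb_{P^j_2}(y)-\eps\rho/C$. Chaining: transfer from $P^j_2$ back to $\Pi P^j_0$ by the inner coreset, from $\Pi P^j_0$ back to $P^j_0$ by JL plus the Kirszbraun/mean-shift recovery argument in the proof of Theorem~\ref{thm:dim_reduction}, and from $P^j_0$ back to $P$ by the outer coreset. After rescaling $\eps$ by an absolute constant and using $\Gb_P(x^*)\ge\rho$ to absorb additive slacks into relative slack, this yields $\Gb_P(x')\ge(1-\eps)\Gb_P(x^*)$.

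\textbf{Amplification and selection.} The two random events per iteration (outer subsample and JL) succeed jointly with constant probability, and independence across the $O(\log\tfrac{1}{\delta})$ iterations gives that at least one iteration succeeds fully with probability at least $1-\delta/2$. The final step must rank the $O(\log\tfrac{1}{\delta})$ candidates accurately. I would use that $\bigcup_j P^j_0$ is distributionally an i.i.d.\ sample of size $\Theta(n_0\log\tfrac{1}{\delta})$, which is well above the Lopaz-Paz threshold, so with probability $1-\delta/2$ the empirical KDE on $\bigcup_j P^j_0$ approximates $\Gb_P$ uniformly to additive $\eps\rho/C$; choosing the best candidate under this evaluation loses only a further constant factor in $\eps$. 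A union bound gives overall success probability $\ge 1-\delta$. Evaluating directly on $P$ is an unconditional alternative and accounts for the $nd\log\tfrac{1}{\delta}$ branch of the $\min$ in the claimed runtime.

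\textbf{Runtime and main obstacle.} Reading $P$ and drawing all samples in one pass costs $O(nd)$. Per iteration, the JL projection costs $O(n_0 dm)$, Phillips-Tai costs $\mathrm{poly}(n_0, m, 1/\eps\rho)$, the mean-shift step costs $O(n_0 d)$, and Algorithm~\ref{alg:sys_poly} on $n_2=\tilde O(1/\eps^2\rho)$ points in dimension $m=O(\log(n_0)/\gamma^2)=O(\eps^{-2}\log^3(1/\eps\rho))$ costs $\left(\log\tfrac{1}{\eps\rho}\right)^{O(m)}$. The solver dominates the per-iteration cost and, multiplied by $\log\tfrac{1}{\delta}$ iterations, gives the $\left(\log\tfrac{1}{\eps\rho}\right)^{O(\eps^{-2}\log^3(1/\eps\rho))}\log\tfrac{1}{\delta}$ term; the remaining $d$-dependent work (JL and final evaluation) collapses to $\min\{nd\log\tfrac{1}{\delta},\,d\log^2(1/\delta)/\eps^2\rho^2\}$ depending on which evaluation strategy one picks. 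The main obstacle is bookkeeping: three distinct approximation layers (outer additive coreset, JL relative distortion, inner additive coreset) feed into a relative-error solver, and one must rescale $\eps$ at each level and use the lower bound $\Gb_P(x^*)\ge\rho$ at precisely the right moments so that all additive slacks of order $\eps\rho$ compose into a single $(1-\eps)$ relative guarantee on $\Gb_P$.
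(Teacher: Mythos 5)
Your proposal is correct and follows essentially the same route as the paper: the same three-layer composition (Lopaz-Paz outer sample, JL projection with Kirszbraun/mean-shift recovery via Theorem~\ref{thm:dim_reduction}, Phillips--Tai inner coreset) around Algorithm~\ref{alg:sys_poly}, amplified by $O(\log\frac{1}{\delta})$ repetitions, with the identical runtime accounting including the $\min$ term from the two evaluation strategies. If anything, you are slightly more explicit than the paper about why evaluating candidates on $\bigcup_j P^j_0$ ranks them accurately (the union is itself an i.i.d.\ sample above the Lopaz-Paz threshold), which is a point the paper leaves implicit.
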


\begin{proof}
	We first show the approximation guarantee.
	It suffices to prove that an iteration of the for loop succeeds with constant probability, so we fix a particular $j$ and omit the superscript in $P_0$ and $P_2$.
	From Lemma \ref{lem:union}, $x^*\in B_q\left(\sqrt{\log \frac{1}{\rho}}\right) \subset B_q\left(\sqrt{\log \frac{1}{\eps\rho}}\right)$ for some $q\in P_2$.
	Let $x_0^{**}$ be $\arg\max_{x\in\R^m}\G_{ {P_2}}(x)$.
	By Lemma \ref{lem:sys_poly_guar} with $r=1$, we have $\Gb_{{P_2}}(x'') \geq \Gb_{ {P_2}}(x_0^{**}) - \eps \rho \geq(1-\eps)\Gb_{{P_2}}(x_0^{**})$.
	Phillips and Tai's coreset result~\cite{phillips2018near} implies,  both $\abs{\Gb_{\Pi P_0}(x'') - \Gb_{P_2}(x'')}\leq \eps\rho$ and $\abs{\Gb_{\Pi P_0}(x^{**}) - \Gb_{P_2}(x^{**})}\leq \eps\rho$ which implies $\Gb_{\Pi P_0}(x'') \geq (1-O(\eps)) \Gb_{\Pi P_0} (x^{**})$.
	Now, let $x_0^*$ be $\arg\max_{x\in\R^d}\G_{ \Pi P_0}(x)$.
	By Theorem \ref{thm:dim_reduction}, with constant probability we have $\G_{P_0}(x') \geq (1-O(\eps))\G_{P_0}(x_0^*)$.
	By Lopaz-Paz \etal~\cite{LopazPaz15}, a random sample $P_0\subset P$ of size $n_0=O(\frac{1}{\eps^2\rho^2})$ is sufficient to have the guarantee of $\abs{\Gb_P(x) - \Gb_{P_0}(x)}\leq \eps\rho$ for any $x\in \R^d$.  
	If we combine this inequality and the guarantee of random sampling, we can conclude that $\Gb_P(x') \geq (1-\eps)\Gb_P(x^*)$.

	We now analyze the running time.
	Reading the input and constructing the coresets $P^j_0$ take 
	$O(nd + n_0\log\frac{1}{\delta})$ time in total.
	Evaluating all the solutions in Step 9 takes $O(n_0 d \log^2\frac{1}{\delta})$ time, since there are $O(\log\frac{1}{\delta})$ many candidates evaluated over a coreset of size $O(n_0 \log \frac{1}{\delta})$ in $d$ dimensions.
	From Theorem~\ref{thm:dim_reduction}, the runtime of a single iteration of the loop is $O(n_0 d m) + T_{m,\eps/2}$, where $T_{m,\eps/2}$ is the runtime of solving the approximate mode finding problem in $m$ dimensions.
	In our case, $T_{m,\eps/2}$ consists of the runtime of the second coreset result as well as Algorithm~\ref{alg:sys_poly}.
	From Section~\ref{sec:coresets}, it takes time $O(n_0 \mathrm{poly}(1/\eps \rho))$ to compute the second coreset $P^j_2$.  
    Then, Theorem~\ref{thm:ig-alg} implies that Algorithm \ref{alg:sys_poly} requires $O(n_2 \log n_2 \cdot (2 \sqrt{2 e \pi})^m + n_2 \cdot \left(\log \frac{m}{\eps\rho}\right)^{O(m)})$.  
    
    The single-loop runtime is dominated by the runtime of Algorithm~\ref{alg:sys_poly} (see Lemma~\ref{lem:poly-logc} for a simple proof).
    Writing out the runtime of Algorithm~\ref{alg:sys_poly} gives
\begin{align*}
 \MoveEqLeft   O\left(n_2 \log n_2 \cdot (2 \sqrt{2 e \pi})^m + n_2 \cdot \left(\log \frac{m}{\eps\rho}\right)^{O(m)} \right)
 \\    &= 
 O\left(n_2 \cdot \left(\log \frac{m}{\eps\rho}\right)^{O(m)}\right)
 \\ &=   
    O\left( \frac{1}{\eps^2\rho}\log^2\frac{1}{\eps\rho} \cdot
                \left(\log \frac{1}{\eps^3 \rho} \log \frac{1}{\eps \rho } \log^2 \frac{1}{\eps \rho}\right)^{O(\frac{1}{\eps^2}\log^3\frac{1}{\eps\rho})}\right)  
\\ &=   
    O\left( \left(\log \frac{1}{\eps \rho } \right)^{O(\frac{1}{\eps^2}\log^3\frac{1}{\eps\rho})}\right).
\end{align*}

Combining with the runtimes for reading the input, coreset construction and evaluating solutions in Step 9, then repeating the loop for $O(\log\frac{1}{\delta})$ times gives the bound in the theorem statement.  If $n < n_0 \log \frac{1}{\delta}$, we use the full set $P$ as each $P_0^j$.  
\end{proof}

In the regime where $\eps$ (the relative error) and $\delta$ (the probability of failure) are constant, the runtime simplifies to 
$O\left(\left(n+\frac{1}{\rho^2}\right)d + 
\left(\log \frac{1}{\rho}\right)^{O(\log^3\frac{1}{\rho})}
\right)$.
Note however that if $1/\rho^2 \le n_0$ dominates $n$, then we would not have constructed the coresets $P^j_0$ in the first place but used the entire point set instead, and so we can treat the first term as just $O(nd)$.
We also recall that $\rho = \Gb_P(x^*) \ge 1/n$, which by substitution gives an upper bound of $O\left(nd + 
\left(\log n\right)^{O(\log^3 n)}
\right)$.
\section{KDE mode finding for Two Dimensional Case}\label{sec:d=2}

In this subsection, we assume that $P\subset \mathbb{R}^2$ and $p=(p_1,p_2)$ for each $p\in  P$.  We can improve our low-dimensional analysis that used a set of systems of polynomials by about a logarithmic factor using a fairly different approach.  This shows how to approximate each Gaussian by a weighted set of rectangles.  After sampling by these weights, we can quickly retrieve the point of maximum depth in these rectangles as an approximation of the maximum.

We first define the following notation.
We let $s = \frac{\eps\rho}{6}$ be a minimal additive error we will allow for the spatial approximation, and then $m = \lceil\frac{1}{s}\rceil$ will be the number of discretizations we will need.  
A Gaussian has infinite support, but we will only need to consider $m$ such widths defined
$r_j=\sqrt{\log \frac{1}{l_j}}$
with $l_j = 1-\frac{j}{m}$
for $j=0,1,\ldots,m$.  As a special case we set $r_m = \infty$ (note that this allows $e^{-r_j^2} = l_j$).  
We can now define a series of axis-parallel rectangles centered any a point $p = (p_1,p_2) \in P$ as
$\mathcal{R}_p = \setdef{[p_1-r_{a_1},p_1+r_{a_1}]\times[p_2-r_{a_2},p_2+r_{a_2}]}{(a_1,a_2)\in\{0,1,\dots,m-1\}^2}$.  
It enumerates all possible widths $r_0,r_1,\dots,r_{m-1}$ on both directions and therefore its size is $m^2$.
Also, denote $\mathcal{R}$ be $\cup_{p\in P} \mathcal{R}_p$.

Given any $x\in\mathbb{R}^d$ and any finite collection $\mathcal{C}$ of subset of $\mathbb{R}^2$, denote $N(\mathcal{C},x)$ be the number of $C\in \mathcal{C}$ that $x\in C$.
That is, $N(\mathcal{C},x)$ is the depth or ply of $x$ with respect to $\mathcal{C}$.
And we can show that the normalized depth approximates the KDE value $\Gb_P(x)$. 

\begin{lemma}\label{lem:counting}
	\[
	\Gb_P(x)\geq \frac{N(\mathcal{R},x)}{n m^2} \geq \Gb_P(x) - \frac{1}{3}\eps \rho
	\]
\end{lemma}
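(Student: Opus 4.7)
The plan is to fix a point $x \in \mathbb{R}^2$ and count, for each data point $p \in P$ separately, how many of the $m^2$ axis-parallel rectangles in $\mathcal{R}_p$ contain $x$, then sum over $p$. A rectangle $[p_1-r_{a_1},p_1+r_{a_1}]\times[p_2-r_{a_2},p_2+r_{a_2}]$ contains $x=(x_1,x_2)$ iff $r_{a_1}\ge |x_1-p_1|$ and $r_{a_2}\ge |x_2-p_2|$. Because the rectangle widths factor across coordinates, if I define $f(y)=\#\{a\in\{0,1,\dots,m-1\}:r_a\ge y\}$, then the count for a single $p$ is exactly $f(|x_1-p_1|)\,f(|x_2-p_2|)$, and hence $N(\mathcal{R},x)=\sum_{p\in P}f(|x_1-p_1|)\,f(|x_2-p_2|)$.

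Next I would derive the one-dimensional approximation $e^{-y^2}-1/m \le f(y)/m \le e^{-y^2}$ for every $y\ge 0$. The inequality $r_a\ge y$ rearranges, via $r_a=\sqrt{\log(1/(1-a/m))}$, to $a\ge m(1-e^{-y^2})$, so $f(y)=m-\lceil m(1-e^{-y^2})\rceil$ when the right-hand side lies in $[0,m]$ (and $f(y)=0$ otherwise). The ceiling introduces error at most $1/m$ on the normalized quantity $f(y)/m$, giving the sandwich bound; the boundary case $y>r_{m-1}$, where $f(y)=0$ but $e^{-y^2}<1/m$, is consistent with both sides of the sandwich.

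With the 1D sandwich in hand, multiplicative separability of the Gaussian, $e^{-\|x-p\|^2}=e^{-(x_1-p_1)^2}e^{-(x_2-p_2)^2}$, gives the upper bound immediately: $\tfrac{f(|x_1-p_1|)}{m}\cdot\tfrac{f(|x_2-p_2|)}{m}\le e^{-\|x-p\|^2}$, and averaging over $p$ yields $N(\mathcal{R},x)/(nm^2)\le \Gb_P(x)$. For the lower bound I expand the product and need to handle the point I expect to be the \emph{main subtlety}: the factor $e^{-(x_i-p_i)^2}-1/m$ can be negative, so I cannot blindly expand $(a-1/m)(b-1/m)$. The fix is a short case split: if both one-dimensional Gaussian factors exceed $1/m$, expansion is valid and the cross terms are bounded by $2/m$; if either factor is below $1/m$, then $e^{-\|x-p\|^2}<1/m$ and so $e^{-\|x-p\|^2}-2/m$ is negative while $f(|x_1-p_1|)f(|x_2-p_2|)/m^2$ is nonnegative. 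Either way, $\tfrac{f(|x_1-p_1|)}{m}\cdot\tfrac{f(|x_2-p_2|)}{m}\ge e^{-\|x-p\|^2}-2/m$.

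Averaging over $p$ gives $N(\mathcal{R},x)/(nm^2)\ge \Gb_P(x)-2/m$. Finally, plugging in $m=\lceil 1/s\rceil\ge 6/(\eps\rho)$ yields $2/m\le \tfrac13 \eps\rho$, completing both bounds of the lemma.
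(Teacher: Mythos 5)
Your proof is correct and takes essentially the same route as the paper: count the rectangles per coordinate (your $f(|x_i-p_i|)/m$ is exactly the paper's $1-\tfrac{a_i}{m}$), sandwich each count within $1/m$ of the corresponding one-dimensional Gaussian factor, multiply the two coordinates, and lose at most $2/m \le \tfrac{1}{3}\eps\rho$ per data point. The only cosmetic difference is that the paper works with the exact deviations $\Delta_{p,x,i}\in[0,1/m]$ and simply drops the nonnegative cross term $\Delta_{p,x,1}\Delta_{p,x,2}$ after expanding, which sidesteps the negativity case split you handle explicitly.
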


We put the detail proof into Appendix \ref{app:proofs}. 
The main idea is to show that Gaussian kernel can be approximated by a collection of axis-parallel rectangle where $m$ indicates how precise it is.
Note that our $m$ does not show up in the actual algorithm since we will apply random sampling later.

Observe that $\abs{\mathcal{R}}=nm^2$.
We can rewrite Lemma~\ref{lem:counting} to be $\Gb_P(x)\geq \frac{N(\mathcal{R},x)}{\abs{\mathcal{R}}} \geq \Gb_P(x) - \frac{1}{3}\eps \rho$.

Now consider $(X,\mathcal{S})$ be a range space with VC dimension $\nu$.  
Given $\eps>0$ and $\alpha>0$, we call a subset $Z$ of $X$ a relative $(\alpha,\eps)$-approximation for $(X,\mathcal{S})$ if, for any $\tau\in \mathcal{S}$, $\abs{\frac{\abs{X\cap \tau}}{\abs{X}} - \frac{\abs{Z\cap \tau}}{\abs{Z}}} \leq \eps M$ when $M = \max\{\frac{\abs{X\cap\tau}}{\abs{X}}, \alpha\}$.  A random sample of size $O\left( \frac{1}{\eps^2\alpha}(\nu\log \frac{1}{\alpha} + \log \frac{1}{\delta}) \right)$ is an $(\alpha,\eps)$-approximation with probability at least $1-\delta$~\cite{har2011relative}.


The range space $(\mathbb{R}^2,\mathcal{B})$ where $\mathcal{B}$ is the set of all axis-parallel box in $\mathbb{R}^2$ has VC dimension $4$.  Thus its dual range space $(\mathcal{B},\mathcal{D})$ where $\mathcal{D}=\setdef{\setdef{B\in\mathcal{B}}{x\in B}}{x\in\mathbb{R}^2}$, has VC dimension is $O(1)$.
Denote $\mathcal{D}_0$ be $\setdef{\setdef{R\in\mathcal{R}}{x\in \mathcal{R}}}{x\in\mathbb{R}^2}$.  

Moreover given a set of $\lambda$ axis-aligned rectangles in $\R^2$, Chan~\cite{chan2013klee} shows how to find a maximal depth point in $O(\lambda \log \lambda)$ time.   
%
%
This leads to a simple algorithm for finding an approximate maximum point $x'$, in Algorithm \ref{alg:depth-d2}, with runtime stated in Theorem \ref{thm:max-d2}.  

\begin{algorithm}[H]
	\caption{\label{alg:depth-d2}Computing Depth}
	{\bf input}: a point set $P\subset \mathbb{R}^2$, parameter $\eps,\rho,\delta>0$
	\begin{algorithmic}[1]
		\STATE generate a random subset $\mathcal{R}_0$ of $\mathcal{R}$ of size $O\left(\frac{1}{\eps^2\rho}(\log \frac{1}{\rho}+\log \frac{1}{\delta})\right)$.  
		\STATE compute $x'\in\mathbb{R}^2$ such that $x'=\arg\max_{x\in\mathbb{R}^d} N(\mathcal{R}_0,x)$ using the algorithm by Chan~\cite{chan2013klee}.
		\STATE \textbf{return} $x'$
	\end{algorithmic}
\end{algorithm}



\begin{theorem}[restated Theorem \ref{thm:2d}]\label{thm:max-d2}
	Given $\eps,\rho>0$ and a point set $P\subset\mathbb{R}^2$ of size $n$, we can find $x'\in\mathbb{R}^2$ so
	\[
	\Gb_P(x')\geq (1-\eps)\Gb_P(x^*)
	\]
	if $\Gb_P(x^*)\geq \rho$ where $x^*=\argmax_{x\in\mathbb{R}^d} \Gb_P(x)$ in $O\left(n + \frac{1}{\eps^2\rho}(\log \frac{1}{\rho}+\log \frac{1}{\delta})\log (\frac{1}{\eps\rho}\log \frac{1}{\delta})\right)$ time with probability at least $1-\delta$.
\end{theorem}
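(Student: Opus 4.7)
The plan is to chain two main relations. First, Lemma~\ref{lem:counting} says that the normalized depth $N(\mathcal{R},x)/|\mathcal{R}|$ approximates $\Gb_P(x)$ pointwise with additive error $\tfrac{1}{3}\eps\rho$. Second, because the dual range space $(\mathcal{R},\mathcal{D}_0)$ has constant VC dimension, a uniform random subset $\mathcal{R}_0\subseteq \mathcal{R}$ of size $O\!\left(\tfrac{1}{\eps'^2\alpha}(\log\tfrac{1}{\alpha}+\log\tfrac{1}{\delta})\right)$ is a relative $(\alpha,\eps')$-approximation: for all $x\in\mathbb{R}^2$ simultaneously, $\left|N(\mathcal{R}_0,x)/|\mathcal{R}_0| - N(\mathcal{R},x)/|\mathcal{R}|\right| \leq \eps'\max\{N(\mathcal{R},x)/|\mathcal{R}|,\alpha\}$. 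I would set $\alpha = \rho/2$ and $\eps' = \Theta(\eps)$, recovering the sample size used in Algorithm~\ref{alg:depth-d2}, and then invoke the exact algorithm of Chan~\cite{chan2013klee} on $\mathcal{R}_0$ to get the maximum-depth point $x'$.

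First I would apply Lemma~\ref{lem:counting} at $x^*$ to deduce
\[
\frac{N(\mathcal{R},x^*)}{|\mathcal{R}|} \;\geq\; \Gb_P(x^*) - \tfrac{\eps\rho}{3} \;\geq\; \bigl(1-\tfrac{\eps}{3}\bigr)\rho \;\geq\; \alpha,
\]
which places $x^*$ in the multiplicative regime of the relative approximation and gives $N(\mathcal{R}_0,x^*)/|\mathcal{R}_0| \geq (1-\eps')N(\mathcal{R},x^*)/|\mathcal{R}|$. Since Chan's algorithm is exact, $N(\mathcal{R}_0,x')/|\mathcal{R}_0| \geq N(\mathcal{R}_0,x^*)/|\mathcal{R}_0|$. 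To transport this back to a lower bound on $N(\mathcal{R},x')/|\mathcal{R}|$, I would case-split: if this quantity exceeds $\alpha$, the relative bound yields a $\tfrac{1-\eps'}{1+\eps'}$-factor loss; otherwise, the upper half of the approximation forces $N(\mathcal{R}_0,x')/|\mathcal{R}_0| \leq (1+\eps')\alpha$, which contradicts the lower bound $(1-\eps')N(\mathcal{R},x^*)/|\mathcal{R}| \geq (1-\eps')(1-\eps/3)\cdot 2\alpha$ for small $\eps,\eps'$. Finally, applying Lemma~\ref{lem:counting} at $x'$ converts $N(\mathcal{R},x')/|\mathcal{R}|$ back to a lower bound on $\Gb_P(x')$, and combining with $\Gb_P(x^*)\geq \rho$ absorbs the $O(\eps\rho)$ additive slack into a $(1-\eps)$ multiplicative slack on $\Gb_P(x^*)$ after a final rescaling of $\eps'$.

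For the runtime, reading $P$ costs $O(n)$. I would sample each element of $\mathcal{R}_0$ implicitly by drawing $p\in P$ and a width pair $(a_1,a_2)\in\{0,\ldots,m-1\}^2$ uniformly at random in $O(1)$ time, thereby avoiding any explicit construction of the size-$nm^2$ collection $\mathcal{R}$. Chan's depth algorithm then returns the maximum-depth point in $O(|\mathcal{R}_0|\log|\mathcal{R}_0|)$ time. With $|\mathcal{R}_0| = O\!\left(\tfrac{1}{\eps^2\rho}(\log\tfrac{1}{\rho}+\log\tfrac{1}{\delta})\right)$, the total is $O\!\left(n + \tfrac{1}{\eps^2\rho}(\log\tfrac{1}{\rho}+\log\tfrac{1}{\delta})\log(\tfrac{1}{\eps\rho}\log\tfrac{1}{\delta})\right)$, matching the theorem statement.

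The main subtlety, rather than an outright obstacle, is the two-directional use of the relative approximation: obtaining a multiplicative lower bound on $N(\mathcal{R},x')/|\mathcal{R}|$ requires knowing we are above the threshold $\alpha$, which is not a priori available. The case analysis sketched above resolves this by leveraging the already-established lower bound on $N(\mathcal{R}_0,x')/|\mathcal{R}_0|$ to rule out the small-depth regime. The other ingredients, namely constant-time implicit sampling from $\mathcal{R}$ and exactness of Chan's algorithm, are routine.
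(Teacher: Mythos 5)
Your proposal is correct and follows essentially the same route as the paper: Lemma~\ref{lem:counting} applied at $x^*$ and $x'$, a relative $(\alpha,\eps)$-approximation obtained by (implicitly) sampling rectangles from $\mathcal{R}$, and Chan's exact maximum-depth algorithm on the sample. The only cosmetic difference is how the low-depth regime at $x'$ is handled: you resolve it with an explicit case split at threshold $\alpha=\rho/2$, while the paper simply uses the approximation's error bound $\eps M$ with $M=\max\{N(\mathcal{R},x')/|\mathcal{R}|,\rho\}$ and observes $M\leq\Gb_P(x^*)$, which amounts to the same thing.
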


\begin{proof}
	First, by Lemma \ref{lem:counting}, $\frac{N(\mathcal{R},x^*)}{\abs{\mathcal{R}}}\geq \Gb_P(x^*) - \frac{1}{3}\eps\rho = \Omega(\rho)$.
	Let $M$ be $\max\{\frac{N(\mathcal{R},x')}{|\mathcal{R}|}, \rho\}$.
	We also have $M = \max\{\frac{N(\mathcal{R},x')}{|\mathcal{R}|}, \rho\} \leq \Gb_P(x^*)$.
	
	\begin{align*}
		\Gb_P(x')
		& \geq
		\frac{N(\mathcal{R},x')}{\abs{\mathcal{R}}} & \text{by Lemma \ref{lem:counting}}\\
		& \geq
		\frac{N(\mathcal{R}_0,x')}{\abs{\mathcal{R}_0}} - \frac{1}{3}\eps M   & \text{by the construction of $\mathcal{R}_0$}\\
		& \geq
		 \frac{N(\mathcal{R}_0,x^*)}{\abs{\mathcal{R}_0}} - \frac{1}{3}\eps M\\
		& \geq
		 \frac{(1-\frac{1}{3}\eps)N(\mathcal{R},x^*)}{\abs{\mathcal{R}}} - \frac{1}{3}\eps M   & \text{by $\frac{N(\mathcal{R},x^*)}{\abs{\mathcal{R}}}=\Omega(\rho)$ and construction of $\mathcal{R}_0$}\\
		& \geq
		(1-\frac{1}{3}\eps)(\Gb_P(x^*) - \frac{1}{3}\eps\rho) - \frac{1}{3}\eps M &\text{by Chan~\cite{chan2013klee}}\\
		& \geq
		(1-\eps)\Gb_P(x^*) & \text{since $M  \leq \Gb_P(x^*)$}
	\end{align*}
	
	To see the running time, note that the size of input $\lambda=O\left(\frac{1}{\eps^2\rho}(\log \frac{1}{\rho}+\log \frac{1}{\delta})\right)$ in our context and $O(n)$ time to create a sample.
	Therefore, the total running time is $O\left(n + \frac{1}{\eps^2\rho}(\log \frac{1}{\rho}+\log \frac{1}{\delta})\log (\frac{1}{\eps\rho}\log \frac{1}{\delta})\right)$.
\end{proof}


\bibliography{main}

\appendix


\section{Technical Proofs}
\label{app:proofs}

\begin{lemma}[restated Lemma \ref{lem:sys_poly_guar}]
\label{app:sys_poly_guar}
	Suppose $r+r'>1$ and $q\in\R^d$ such that $\norm{x^*-q} \leq r\sqrt{\log \frac{1}{\eps\rho}}$.
	Then, the output $x^{(q)}$ of $\textsf{SysPoly}(P,q,r,r')$ satisfies
	\[
	\G_{Q_{P,q}(r')}(x^{(q)}) \geq \G_{Q_{P,q}(r')}(x^*) - \abs{Q_{P,q}(r')}\frac{\eps \rho}{2}
	\]
\end{lemma}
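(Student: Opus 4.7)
The plan is to show that the polynomial $\Phi(x) := \sum_{p \in Q_{P,q}(r')}\prod_{i=1}^d\sum_{j=0}^{s-1}\frac{(-(x_i-p_i)^2)^j}{j!}$ uniformly approximates $\G_{Q_{P,q}(r')}(x)$ on the feasible ball $\norm{x-q}^2 \leq r^2\log(1/\eps\rho)$, and to then combine this approximation quality with the binary-search termination guarantee of $\textsf{SysPoly}$. First observe that the hypothesis $\norm{x^*-q}\leq r\sqrt{\log(1/\eps\rho)}$ places $x^*$ inside the feasible ball, so $x^*$ is itself a valid witness of feasibility at the threshold $\beta = \Phi(x^*)$.

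The technical heart is a uniform bound on the pointwise approximation error $\mathcal{E}(x) := |\Phi(x) - \G_{Q_{P,q}(r')}(x)|$. For any $x$ in the feasible ball and any $p\in Q_{P,q}(r')$, the triangle inequality gives $\norm{x-p}\leq (r+r')\sqrt{\log(1/\eps\rho)}$, so each coordinate satisfies $(x_i-p_i)^2 \leq (r+r')^2\log(1/\eps\rho)$. The Lagrange remainder for the order-$(s-1)$ Taylor expansion of $e^{-t}$ is at most $t^s/s!$; combined with Stirling's bound $s!\geq(s/e)^s$ and the choice $s=(r+r')^2 e^2\log(d/\eps\rho)$, the per-coordinate truncation error is at most $(te/s)^s \leq e^{-s} = (\eps\rho/d)^{e^2(r+r')^2}$, which I denote $\eta$. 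Because each factor $e^{-(x_i-p_i)^2}$ lies in $[0,1]$ and deviates from its truncation by at most $\eta$, the $d$-fold product deviates by at most $(1+\eta)^d-1 \leq 2d\eta$ (using $d\eta\leq 1$, which is amply satisfied), and summing over the $\abs{Q_{P,q}(r')}$ centers yields $\mathcal{E}(x)\leq 2d\eta \cdot \abs{Q_{P,q}(r')}$. The assumption $r+r'>1$ forces the exponent $e^2(r+r')^2$ to exceed $e^2$, so $\eta$ is in fact polynomially small in $\eps\rho/d$ to a power greater than $7$, leaving $\mathcal{E}(x) \ll \abs{Q_{P,q}(r')}\eps\rho/5$.

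It remains to chain these ingredients with the binary-search guarantee. The search in $\textsf{SysPoly}(P,q,r,r')$ terminates with a witness $x^{(q)}$ certifying $\Phi(x^{(q)})\geq \beta$ for some $\beta$ within the binary-search slack of the supremum of $\Phi$ over the feasible ball; since $x^*$ is feasible, this gives $\Phi(x^{(q)}) \geq \Phi(x^*) - \tau$ where $\tau$ is the binary-search slack. Chaining
\[
\G_{Q_{P,q}(r')}(x^{(q)}) \geq \Phi(x^{(q)}) - \mathcal{E}(x^{(q)}), \quad \Phi(x^{(q)}) \geq \Phi(x^*) - \tau, \quad \Phi(x^*) \geq \G_{Q_{P,q}(r')}(x^*) - \mathcal{E}(x^*),
\]
and absorbing $\tau$ together with the two $\mathcal{E}$ terms into a total slack of $\abs{Q_{P,q}(r')}\eps\rho/2$ yields the claimed inequality.

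The main obstacle is the middle step: the per-coordinate Taylor tail must be tiny enough to survive both the multiplicative amplification across the $d$ coordinates and the additive amplification across the $\abs{Q_{P,q}(r')}$ centers, while leaving headroom for the binary-search slack. The choice $s=(r+r')^2 e^2\log(d/\eps\rho)$ is precisely calibrated so that the Stirling-based ratio $te/s$ is at most $1/e$, making the remainder decay like $e^{-s}$ and gaining an extra factor of $e^2$ in the exponent that comfortably dominates the $d$-fold product expansion and the summation over the neighborhood.
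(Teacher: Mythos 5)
Your proof is correct and takes essentially the same route as the paper's: feasibility of $x^*$ plus the binary-search slack give $\Phi(x^{(q)}) \geq \Phi(x^*) - \tau$, and the truncation error is bounded per coordinate via the Lagrange remainder with Stirling's bound and the choice of $s$, then amplified over the $d$ factors and the $\abs{Q_{P,q}(r')}$ centers (your remainder bound $t^s/s!$ for $e^{-t}$ is in fact slightly tighter than the paper's, which pays an extra factor $e^{\abs{\alpha_i}}$ and hence only gets exponent $(r+r')^2(e^2-1)$). One caveat on your final absorption step: the binary-search slack is $\frac{1}{10}\abs{P}\eps\rho$, so what your argument (and the paper's own proof) actually delivers is the bound with $\abs{P}\frac{\eps\rho}{2}$, as in the main-text statement of the lemma; the $\abs{Q_{P,q}(r')}$ normalization in this restated version cannot be absorbed this way when $\abs{Q_{P,q}(r')} \ll \abs{P}$, and appears to be a typo in the restatement rather than a flaw in your approach.
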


\begin{proof}
	
	First, we write $\sum_{p\in Q_q(r')} e^{-\norm{p-x^{(q)}}^2}$ into the following form.
	\begin{align*}
	\sum_{p\in Q_q(r')} e^{-\norm{p-x^{(q)}}^2}
	& =
	\sum_{p\in Q_q(r')} \prod_{i=1}^d \left(\sum_{j=0}^\infty\frac{1}{j!}\left(-(x^{(q)}_i-p_i)^2\right)^j\right) \\
	& =
	\sum_{p\in Q_q(r')} \prod_{i=1}^d \left(\sum_{j=0}^s\frac{1}{j!}\left(-(x^{(q)}_i-p_i)^2\right)^j\right) + \mathcal{E}(x^{(q)})
	\end{align*}
	where $\mathcal{E}(x) = \sum_{p\in Q_q(r')} \sum_{j_1,\dots,j_d\mid \text{one of $j_i\geq s$}}\frac{1}{j_1!\cdots j_d!}\left(-(x_1-p_1)^2\right)^{j_1}\cdots \left(-(x_d-p_d)^2\right)^{j_d}$ for any $x\in \mathbb{R}^d$.

	Now, we have
	\begin{align*}
	\sum_{p\in Q_q(r')}e^{-\norm{x^{(q)}-p}^2}
	& =
	\sum_{p\in Q_q(r')} \prod_{i=1}^d \left(\sum_{j=0}^s\frac{1}{j!}\left(-(x^{(q)}_i-p_i)^2\right)^j\right) + \mathcal{E}(x^{(q)}) \\
	& \geq
	\sum_{p\in Q_q(r')} \prod_{i=1}^d \left(\sum_{j=0}^s\frac{1}{j!}\left(-(x^*_i-p_i)^2\right)^j\right) - \abs{P}\frac{\eps\rho}{10} + \mathcal{E}(x^{(q)}) \\
	& \geq
	\sum_{p\in Q_q(r')}e^{-\norm{x^{*}-p}^2} - \abs{P}\frac{\eps\rho}{10}  + \mathcal{E}(x^{(q)}) - \mathcal{E}(x^*)
	\end{align*}
	
	In order to analyze the term $\mathcal{E}(x^{(q)})$ and $\mathcal{E}(x^*)$, we can first analyze the term 
	\[\abs{\sum_{j_1,\dots,j_d\mid \text{one of $j_i\geq s$}}\frac{1}{j_1!\cdots j_d!}\alpha_1^{j_1}\cdots \alpha_d^{j_d}}\]
	 where $\alpha_i = -(y_i-p_i)^2$ where $y$ is $x^{(q)}$ or $x^*$.
	\begin{align*}
	\sum_{j_1,\dots,j_d\mid \text{one of $j_i\geq s$}}\left(\prod_{i=1}^d\frac{1}{j_i!}\alpha_i^{j_i}\right)
	& =
	\sum_{i=1}^d \left(\prod_{k=1}^{i-1}\sum_{j=0}^{s-1}\frac{1}{j!}\alpha_k^j\right)\left(\sum_{j=s}^\infty\frac{1}{j!}\alpha_i^j\right)\left(\prod_{k=i+1}^d\sum_{j=0}^\infty\frac{1}{j!}\alpha_k^j\right)
	\end{align*}
	For each $i=1,2,\dots,d$, by taking $s= (r+r')^2e^2\log \frac{d}{\eps\rho}$,
	\begin{align*}
	\abs{\sum_{j=s}^\infty \frac{1}{j!}\alpha_i^j }
	\leq
	\sum_{j=s}^\infty \frac{1}{j!}\abs{\alpha_i }^j 
	\leq
	\max_{\xi\in[-\abs{\alpha_i},\abs{\alpha_i}]}\frac{e^{\xi}}{s!}\abs{\alpha_i}^s 
	\end{align*}
	The last inequality is the error approximation of Taylor expansion of exponential function.
	Note that $\abs{\alpha_i} = (y_i-p_i)^2 \leq \norm{y-p}^2 \leq \left(\norm{y-q} + \norm{p-q}\right)^2 \leq \left(r\sqrt{\log\frac{1}{\eps\rho}} + r'\sqrt{\log\frac{1}{\eps\rho}}\right)^2 \leq (r+r')^2\log\frac{1}{\eps\rho}$.
	\begin{align*}
	\abs{\sum_{j=s}^\infty \frac{1}{j!}\alpha_i^j }
	& \leq
	\frac{e^{(r+r')^2\log\frac{1}{\eps\rho}}}{s!}((r+r')^2\log\frac{1}{\eps\rho})^s \\
	& \leq
	\frac{e^{(r+r')^2\log\frac{1}{\eps\rho}}}{s^s}((r+r')^2e\log\frac{1}{\eps\rho})^s &\text{by $s! \geq (\frac{s}{e})^s$} \\
	& \leq
	\frac{e^{(r+r')^2\log\frac{1}{\eps\rho}}}{e^{s}}
	\leq 
	(\frac{\eps\rho}{d})^{(r+r')^2(e^2-1)} &\text{recall that $s= (r+r')^2e^2\log \frac{d}{\eps\rho}$} \\
	& \leq
	\frac{\eps\rho}{20d} &\text{by $r+r'>1$ and for sufficient small $\eps\rho$} 
	\end{align*}
	
	Now, we can plug this into $\abs{\sum_{j_1,\dots,j_d\mid \text{one of $j_i\geq s$}}\frac{1}{j_1!\cdots j_d!}\alpha_1^{j_1}\cdots \alpha_d^{j_d}}$.
	\begin{align*}
	&
	\abs{\sum_{j_1,\dots,j_d\mid \text{one of $j_i\geq s$}}\frac{1}{j_1!\cdots j_d!}\alpha_1^{j_1}\cdots \alpha_d^{j_d}} \\
	& =
	\abs{\sum_{i=1}^d \left(\prod_{k=1}^{i-1}\sum_{j=0}^{s-1}\frac{1}{j!}\alpha_k^j\right)\left(\sum_{j=s}^\infty\frac{1}{j!}\alpha_i^j\right)\left(\prod_{k=i+1}^d\sum_{j=0}^\infty\frac{1}{j!}\alpha_k^j\right)} \\
	& \leq
	\sum_{i=1}^d \left(\prod_{k=1}^{i-1}(1+\frac{\eps\rho}{10d})\right)\left(\frac{\eps\rho}{10d}\right)\left(\prod_{k=i+1}^de^{\alpha_k}\right) \\
	& \leq
	\left(1+\frac{\eps\rho}{20d}\right)^d\frac{\eps\rho}{20} 
	\leq
	e^{\frac{\eps\rho}{20}}\frac{\eps\rho}{20}
	\leq
	\frac{\eps\rho}{8} & \text{for sufficient small $\eps\rho$}
	\end{align*}
	
	That means
	\begin{align*}
	\sum_{p\in Q_q(r')}e^{-\norm{x^{(q)}-p}^2}
	& \geq
	\sum_{p\in Q_q(r')}e^{-\norm{x^{*}-p}^2} - \abs{P}\frac{\eps\rho}{10}  + \mathcal{E}(x^{(q)}) - \mathcal{E}(x^*) \\ 
	& \geq
	\sum_{p\in Q_q(r')}e^{-\norm{x^{*}-p}^2} - \abs{P}\frac{\eps\rho}{10} - \abs{Q_{P,q}(r')}\frac{\eps\rho}{8} - \abs{Q_{P,q}(r')}\frac{\eps\rho}{8} \\
	& = 
	\sum_{p\in Q_q(r')}e^{-\norm{x^{*}-p}^2} - \abs{P}\frac{\eps\rho}{2} \qedhere
	\end{align*}
	
%
\end{proof}

\begin{lemma} \label{lem:poly-logc}
For $n \geq 1$ and $c \geq 1$ then $n^c \leq (\log n)^{c \log^3 n}$.
\end{lemma}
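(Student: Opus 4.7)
The plan is to take logarithms of both sides of $n^c \leq (\log n)^{c \log^3 n}$ and reduce the claim to a single elementary inequality. Writing $L = \log n$, taking $\log$ of both sides transforms the statement into $cL \leq c L^3 \log L$. Since $c \geq 1$ and $L > 0$ whenever $n > 1$, I would divide through by $cL$ to reduce the problem to verifying the one-variable bound $\log^2 n \cdot \log \log n \geq 1$, which is where the real content of the lemma lies. Note that once this reduction is done, the parameter $c$ has been completely eliminated, so the lemma reduces to a purely analytic fact about $n$.

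The remaining step is to observe that $f(n) = \log^2 n \cdot \log \log n$ is monotonically increasing on $[e^e, \infty)$ and diverges to infinity. Hence it suffices to verify the bound at a single threshold value: for $n \geq 17$, say, $\log n > e$ already forces $\log^2 n > e^2$ and $\log \log n > 1$, so their product is $> e^2 > 1$. By monotonicity the inequality then holds for all larger $n$, and the original claim $n^c \leq (\log n)^{c \log^3 n}$ follows by exponentiating back.

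The main subtlety (more of a bookkeeping issue than a genuine obstacle) is that the reduced inequality does \emph{not} hold literally for every $n \geq 1$; for instance $n = 2$ or $n = e$ give $\log^2 n \cdot \log \log n < 1$. The lemma should therefore be read as holding for $n$ beyond some absolute constant, and indeed this is all that is needed in context: each application in the proof of Theorem~\ref{thm:high} invokes the lemma with $n$ polynomial in $1/(\eps\rho)$ treated asymptotically, so finite exceptions are absorbed into the big-$O$ constants governing the simplification of $(2\sqrt{2 e \pi})^m$ into $(\log(m/\eps\rho))^{O(m)}$. I would flag this interpretation explicitly when writing up the proof, to avoid the impression that the inequality is claimed pointwise for all $n \geq 1$.
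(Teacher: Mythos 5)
Your proof takes essentially the same route as the paper's: after taking logarithms the claim reduces to $\log^2 n \cdot \log\log n \geq 1$, which is exactly the exponent comparison in the paper's one-line chain $(\log n)^{c\log^3 n} = 2^{c\log^3 n\,\log\log n} = n^{c\log^2 n\,\log\log n} \geq n^c$ (base-$2$ logs). Your caveat that the inequality fails for small $n$ (e.g.\ $n=2$) is correct and is silently glossed over by the paper's proof, but as you observe it is harmless since the lemma is only invoked asymptotically, with the relevant quantity polynomial in $1/(\eps\rho)$.
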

\begin{proof}
The proof is straightforward algebra (assuming the base of $\log$ is $2$)
\begin{align*}
(\log n)^{c \log^3 n}
&= 
2^{c \log^3 n \log \log n}
\\ & =
n^{c \log^2 n \log \log n}
\\ & \geq 
n^c. \qedhere
\end{align*}
\end{proof}

This implies that $\mathrm{poly}(n) = (\log n)^{O(\log^3)}$.  A relevant context is where $m = O((1/\eps^2) \log^3(1/\eps \rho))$ so $\mathsf{poly}(1/\eps \rho) \leq (\log \frac{m}{\eps \rho})^{O(m)}$.  

\begin{lemma}[restated Lemma \ref{lem:counting}]\label{app:counting}
	\[
	\Gb_P(x)\geq \frac{N(\mathcal{R},x)}{n m^2} \geq \Gb_P(x) - \frac{1}{3}\eps \rho
	\]
\end{lemma}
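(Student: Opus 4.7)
The plan is to exploit the multiplicative separability of the Gaussian kernel, $K(p,x)=e^{-(x_1-p_1)^2}\cdot e^{-(x_2-p_2)^2}$, so that the count of rectangles around a single point $p\in P$ factors as a product of two one-dimensional counts. First, I would fix $p\in P$ and observe that a rectangle $[p_1-r_{a_1},p_1+r_{a_1}]\times[p_2-r_{a_2},p_2+r_{a_2}]\in\mathcal{R}_p$ contains $x$ iff $|x_i-p_i|\leq r_{a_i}$ for $i=1,2$ independently. Thus $N(\mathcal{R}_p,x)=c_1(x,p)\cdot c_2(x,p)$ where $c_i(x,p):=|\{a\in\{0,\dots,m-1\}:|x_i-p_i|\leq r_a\}|$.

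Next, I would compute each $c_i$ explicitly. Setting $\beta_i:=e^{-(x_i-p_i)^2}\in(0,1]$, the inequality $|x_i-p_i|\leq r_a=\sqrt{\log(1/l_a)}$ is equivalent to $\beta_i\geq l_a=1-a/m$, i.e.\ $a\geq m(1-\beta_i)$. A direct count of integers in $\{0,\dots,m-1\}$ exceeding the threshold $m(1-\beta_i)$ yields the two-sided bound
\[
m\beta_i-1\;\leq\;c_i(x,p)\;\leq\;m\beta_i,
\]
which is the only place a small amount of care is required (handling the non-integer vs.\ integer threshold case, and the possibility $m\beta_i<1$ for which $c_i=0$ still satisfies both inequalities).

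Multiplying these bounds in the two dimensions gives the per-point rectangle-count estimate
\[
m^2 K(p,x)-2m\;\leq\;N(\mathcal{R}_p,x)=c_1 c_2\;\leq\;m^2 K(p,x),
\]
where the lower bound expands $(m\beta_1-1)(m\beta_2-1)=m^2\beta_1\beta_2-m(\beta_1+\beta_2)+1$ and uses $\beta_i\leq 1$. Summing over $p\in P$, using $\sum_{p\in P}K(p,x)=n\Gb_P(x)$ and $|\mathcal{R}|=nm^2$, yields
\[
\Gb_P(x)-\frac{2}{m}\;\leq\;\frac{N(\mathcal{R},x)}{nm^2}\;\leq\;\Gb_P(x).
\]
Finally, since $m=\lceil 1/s\rceil\geq 6/(\eps\rho)$ by the choice $s=\eps\rho/6$, we get $2/m\leq\eps\rho/3$, finishing the proof.

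The only mild obstacle is the discrete counting in Step~2 — getting the floor/ceiling right and verifying the two edge cases $\beta_i=1$ (giving $c_i=m$) and $m\beta_i<1$ (giving $c_i=0$) both obey the clean bounds $m\beta_i-1\leq c_i\leq m\beta_i$. Everything else reduces to separability of the Gaussian and the elementary product expansion.
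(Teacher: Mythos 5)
Your proof is correct and follows essentially the same route as the paper's: both exploit the separability of the Gaussian to write $N(\mathcal{R}_p,x)$ as a product of per-coordinate counts, show each count is within $1$ (i.e., one level of width $1/m\le s$) of $m\,e^{-(x_i-p_i)^2}$, and expand the product to lose at most $2/m\le \eps\rho/3$ per point before summing over $P$. The only nit is that multiplying the lower bounds $c_i\ge m\beta_i-1$ needs a word when some $m\beta_i<1$ (a lower bound may then be negative), but in that case $c_i=0$ while $m^2\beta_1\beta_2< m\le 2m$, so your claimed per-point bound still holds --- a one-line fix for an edge case you already flagged.
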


\begin{proof}	
	For any $p\in P$ and $i\in \{1,2\}$, let $a_i$ be the integer such that $r_{a_i-1} \leq \abs{p_i-x_i} \leq r_{a_i}$ which implies $e^{-r_{a_i-1}^2} \geq e^{-(p_i-x_i)^2} \geq e^{-r_{a_i}^2} = 1-\frac{a_i}{m}$.
	Then, we have 
	\[
	e^{-\norm{p-x}^2} = e^{-(p_1-x_1)^2-(p_2-x_2)^2} \geq (1-\frac{a_1}{m})(1-\frac{a_2}{m}) = \frac{N(\mathcal{R}_p,x)}{m^2}
	\]
	Note that $N(\mathcal{R},x) =\sum_{i=1}^d N(\mathcal{R}_p,x)$.
	Now, 
	\begin{align*}
	\G_P(x)
	=
	\sum_{p\in P} e^{-\norm{p-x}^2}
	\geq
	\sum_{p\in P} \frac{N(\mathcal{R}_p,x)}{m^2}
	=
	\frac{N(\mathcal{R},x)}{m^2}		
	\end{align*}
	
	On the other hand, let $\Delta_{p,x,i}=e^{-(p_i-x_i)^2} - (1-\frac{a_i}{m})$ which is larger than $0$,
	\begin{align*}
	\frac{N(\mathcal{R}_p,x)}{m^2}
	& =
	(1-\frac{a_1}{m})(1-\frac{a_2}{m})
	=
	\left(  e^{-(p_1-x_1)^2} - \Delta_{p,x,1} \right)\left(  e^{-(p_2-x_2)^2} - \Delta_{p,x,2} \right) \\
	& = 
	e^{-(p_1-x_1)^2}e^{-(p_2-x_2)^2} - \Delta_{p,x,1}e^{-(p_2-x_2)^2} - \Delta_{p,x,2}e^{-(p_1-x_1)^2} + \Delta_{p,x,1}\Delta_{p,x,2} \\
	& \geq
	e^{-(p_1-x_1)^2}e^{-(p_2-x_2)^2} - \Delta_{p,x,1}e^{-(p_2-x_2)^2} - \Delta_{p,x,2}e^{-(p_1-x_1)^2}
	\end{align*}
	Recall that $e^{-r_{a_i-1}^2} \geq e^{-(p_i-x_i)^2} \geq e^{-r_{a_i}^2}$ which implies $\Delta_{p,x,i} \leq e^{-r_{a_i-1}^2} - e^{-r_{a_i}^2} = s$.
	The above equation becomes
	\begin{align*}
	\frac{N(\mathcal{R}_p,x)}{m^2}
	& \geq 
	e^{-(p_1-x_1)^2}e^{-(p_2-x_2)^2} - \Delta_{p,x,1}e^{-(p_2-x_2)^2} - \Delta_{p,x,2}e^{-(p_1-x_1)^2} \\
	& \geq
	e^{-\norm{p-x}^2} - 2s
	\end{align*}
	Finally, we have
	\begin{align*}
	\frac{N(\mathcal{R},x)}{m^2}
	=
	\sum_{p\in P}\frac{N(\mathcal{R}_p,x)}{m^2} 
	\geq 
	\sum_{p\in P}(e^{-\norm{p-x}^2} - 2s) 
	=
	\G_P(x) - \frac{1}{3}\eps n\rho. \phantom{1234134512dggsdfg545} \qedhere
	\end{align*}	
\end{proof}

\end{document}